\tikzset{double line with arrow/.style args={#1,#2}{decorate,decoration={markings,%
mark=at position 0 with {\coordinate (ta-base-1) at (0,1pt);
\coordinate (ta-base-2) at (0,-1pt);},
mark=at position 1 with {\draw[#1] (ta-base-1) -- (0,1pt);
\draw[#2] (ta-base-2) -- (0,-1pt);
}}}}
\newcommand{\p}{\partial}
\newcommand{\nn}{\nonumber}
\newcommand{\e}{\epsilon}
\newcommand{\bt}{{\bf t}}
\newcommand{\F}{\mathcal{F}}
\newcommand{\CC}{\mathbb{C}}
\newcommand{\beq}{\begin{equation}}
\newcommand{\eeq}{\end{equation}}
\numberwithin{equation}{section}
\theoremstyle{plain}
\newtheorem{theorem}{Theorem}
\newtheorem{lemma}{Lemma}
\theoremstyle{definition}
\begin{document}
\title[Loop Equations]{GUE via Frobenius Manifolds. II. Loop Equations}
\author{Di Yang}
\address{School of Mathematical Sciences, University of Science and Technology of China, Hefei 230026, P.R.~China}
\email{diyang@ustc.edu.cn}
\date{}
\begin{abstract}
A theorem of Dubrovin establishes the 
relationship between the GUE partition function and the partition function of 
Gromov-Witten invariants of the complex projective line.
Based on this theorem we derive loop equations for the Gaussian Unitary Ensemble (GUE) partition function. 
We show that the GUE partition function is equal to part of the topological partition function of 
the non-linear Schr\"odinger (NLS) Frobenius manifold.  
\end{abstract}
\maketitle

\setcounter{tocdepth}{1}
\tableofcontents

\section{Introduction}
In this article, which is the second one in the series, 
we derive the loop equations for the Gaussian Unitary Ensemble (GUE) partition function.
By the GUE partition function, we mean the following integral~\cite{AvM, Du09, DY17, Y2} (see also~\cite{FGZ, KKN, mehta, Mo94}):
\beq\label{globaldefZ}
Z(x,{\bf s};\e) = \frac{(2\pi)^{-n}\e^{-\frac1{12}}}{{\rm Vol}(n)} \int_{{\mathcal H}(n)} e^{-{\frac1\e}  {\rm tr} \, Q(M; {\bf s})} dM, \quad x=n\e,
\eeq
where ${\bf s}:=(s_1,s_2,\cdots)$, 
${\rm Vol}(n) = \pi^{\frac{n(n-1)}2}/\prod_{j=1}^{n-1} j!$,  
$Q(m; {\bf s}) = \frac12 m^2 -\sum_{j\geq 1} s_{j} m^{j}$,
and $dM = \prod_{1\leq i\leq n} d M_{ii} \prod_{1\leq i<j\leq n} d{\rm Re} M_{ij} d{\rm Im}M_{ij}$.
Here $x$ denotes the {\it t'Hooft coupling constant}~\cite{thooft1,thooft2}.

Recall that the logarithm of the GUE partition function $\log Z(x,{\bf s};\e)=:\F(x,{\bf s};\e)$, called the 
{\it GUE free energy}, has the following genus expansion:
\beq
\F(x,{\bf s};\e) =: \sum_{g\ge0} \e^{2g-2} \F_g(x, {\bf s}),
\eeq
where $\F_g(x,{\bf s})$ is called the {\it genus $g$ GUE free energy}. More explicitly, we have~\cite{BIZ, HZ} 
(cf.~\cite{Barnes, Du09, DY17, EM, FL, WW, Y2})
\begin{align}
\F_g(x, {\bf s}) = & \frac{x^2}{2\e^2}\Bigl(\log x-\frac32\Bigr) \delta_{g,0} + 
 \Bigl(\zeta'(-1) - \frac{\log x}{12}\Bigr) \delta_{g,1} + \delta_{g\geq2} \frac{\e^{2g-2} B_{2g}}{4g(g-1)x^{2g-2}} \nn \\
& + \sum_{k\geq 1} \sum_{ j_1, \dots, j_k\geq 1\atop 2-2g - k + |{\bf j}|/2\ge1} a_g({\bf j}) 
 s_{j_1} \cdots s_{j_k} \e^{2g-2} x^{2-2g - k + \frac{|{\bf j}|}2}.
\end{align}
Here, for ${\bf j}=(j_1,\dots,j_k)$, $a_g({\bf j}) := \sum_{G} \frac{j_1 \cdots j_k}{|{\rm Aut} (G)|}$ with 
 the summation taken over all connected ribbons graphs of genus $g$ with $k$ vertices of valencies $j_1,\dots,j_k$.

In~\cite{Y2}, we gave a proof of a theorem of Dubrovin 
regarding the relationship between the GUE partition function and the partition function for the Gromov--Witten (GW)
invariants of~$\mathbb{P}^1$. The latter can be computed by solving  
the Dubrovin--Zhang (DZ) loop equation for the $\mathbb{P}^1$ Frobenius manifold. So   
one can use this loop equation for computing the GUE partition function. 
By the Legendre-type transformation~\cite{Du96}, 
the $\mathbb{P}^1$ Frobenius manifold has a correspondence to the so-called NLS Frobenius manifold~\cite{Du96}. 
Here ``NLS" stands for non-linear Schr\"odinger.   
To get a better geometric picture for GUE, we derive 
in this paper another loop equation which 
 turns out to be the same as the DZ loop equation for the NLS Frobenius manifold. 
Based on this we identify the GUE partition function 
with part of the topological partition function of the NLS Frobenius manifold, as originally observed by Dubrovin~\cite{Du09}.
 
Let us start with recalling the definition of a Frobenius manifold~\cite{Du93, Du96}, which is now also known as Dubrovin--Frobenius manifold.
The notion of Frobenius manifold was introduced by Dubrovin~\cite{Du93,Du96} to give a coordinate free description 
 for the WDVV (Witten--Dijkgraaf--Verlinde--Verlinde) equations that appeared in the 2D topological field theories. 
 Recall that a {\it Frobenius algebra} is a triple $(V, \, e, \, \langle\,,\,\rangle)$, 
where $V$ is a commutative and associative algebra over~$\CC$
with unity~$e$, and $\langle\,,\,\rangle: V\times V\rightarrow \mathbb{C}$ is a symmetric 
non-degenerate bilinear product satisfying 
$\langle x\cdot y , z\rangle = \langle x, y\cdot z\rangle$, $\forall\,x,y,z \in V$. 
A {\it Frobenius structure} of charge~$d$ on an $n$-dimensional complex manifold~$M$ is a family of 
Frobenius algebras $(T_p M, \, e_p, \, \langle\,,\,\rangle)_{p\in M}$, 
depending holomorphically on~$p$ and satisfying the following three axioms:
\begin{itemize}
\item[{\bf FM1}]  The metric $\langle\,,\,\rangle$ is flat; moreover, 
denote by~$\nabla$ the Levi--Civita connection of~$\langle\,,\,\rangle$, then it is required that 
\beq\label{flatunity511}
\nabla e = 0.
\eeq
\item[{\bf FM2}] Define a $3$-tensor field~$c$ by 
$c(X,Y,Z):=\langle X\cdot Y,Z\rangle$, for $X,Y,Z$ being holomorphic vector fields on~$M$. 
Then the $4$-tensor field $\nabla c$ is required to be 
symmetric. 
\item[{\bf FM3}]
There exists a holomorphic vector field~$E$ on~$M$ satisfying
\begin{align}
%& \nabla\nabla E = 0, \label{linear}\\
&  [E, X\cdot Y]-[E,X]\cdot Y - X\cdot [E,Y] = X\cdot Y, \label{E2}\\
&  E \langle X,Y \rangle - \langle [E,X],Y\rangle - \langle X, [E,Y]\rangle 
= (2-d) \langle X,Y\rangle. \label{E3}
\end{align}
\end{itemize}
The quintuple $(M,\, \cdot,\, e,\, \langle \,,\, \rangle,\, E)$ is  
called an {\it $n$-dimensional Frobenius manifold}, and the vector field~$E$ is called the {\it Euler vector field}.

A non-zero element $a$ of a Frobenius algebra is called a {\it nilpotent} if there exists $m\geq2$ such that $a^m=0$.
A Frobenius algebra is called {\it semisimple} if it contains no nilpotents. A point~$p$ on a Frobenius 
manifold~$M$ is called a {\it semisimple point} if the Frobenius algebra $T_p M$ is semisimple. A Frobenius manifold is 
called {\it semisimple} if its generic points are semisimple.

Starting with a Frobenius manifold~$M$, an integrable hierarchy of partial differential equations  
(PDEs) of hydrodynamic type, called the {\it principal hierarchy}, was defined~\cite{Du93, Du96}, and the 
 genus zero topological free energy $\F^{\rm top}_0$ for the principal hierarchy was constructed~\cite{Du93, Du96}.
It was shown by Dubrovin and Zhang that $\F^{\rm top}_0$ satisfies the genus zero Virasoro constraints~\cite{DZ99} (see also~\cite{EHX, Getzler, LT}). 
For the case when $M$ is semisimple, Dubrovin and Zhang~\cite{DZ-norm} (cf.~\cite{Du14}) 
 derived the loop equation for~$M$, and by solving this equation gave a way of 
reconstructing the higher genus topological free energies $\F^{\rm top}_g$, $g\ge1$. The exponential 
\beq
Z^{\rm top} = \exp \biggl( \,\sum_{g\ge0} \e^{2g-2} \F^{\rm top}_g \biggr)
\eeq
is called the {\it topological partition function of~$M$}.

In this paper we focus on the above-mentioned two important Frobenius manifolds:  
the NLS Frobenius manifold, denoted by 
$M_{\rm NLS}$, and the $\mathbb{P}^1$ Frobenius manifold, denoted by 
$M_{\rm \mathbb{P}^1}$. Both are two-dimensional and semisimple, and 
their potentials (see in Appendix~\ref{appa} for the definition of a potential of a Frobenius manifold) are given by 
\beq\label{FNLS27}
F_{M_{\rm NLS}} = \frac12 \varphi^2 \rho + \frac12 \rho^2 \log \rho - \frac34 \rho^2
\eeq
and 
\beq\label{FP11009}
F_{M_{\mathbb{P}^1}} = \frac12 v^2 u + e^u,
\eeq
respectively. It was found in~\cite{Du96} that they are related by the 
Legendre-type transformation~\cite{CDZ, Du96, DZ-norm}.
The principal hierarchy associated to $M_{\rm NLS}$ is equivalent to the dispersionless limit 
of the extended nonlinear Schr\"odinger hierarchy \cite{CDZ, CvdLPS, Du96, FY}), 
and the principal hierarchy associated to~$M_{\mathbb{P}^1}$ coincides with the dispersionless limit of the 
extended Toda hierarchy~\cite{CDZ, Du96, DZ-norm}.

The following theorem was shown in~\cite{DY17} (see also~\cite{AvM, Deift, GMMMO91, MMMM91, Mo94, Y2}).

\begin{theorem}[\cite{DY17}] \label{thmguetoda}
The GUE partition function $Z(x, {\bf s}; \e)$ is a particular tau-function in the sense of~\cite{CDZ, DY17, DZ04} for the 
Toda lattice hierarchy. 
\end{theorem}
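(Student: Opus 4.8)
The plan is to reduce the Hermitian matrix integral~\eqref{globaldefZ} to an integral over eigenvalues and then to exhibit the resulting quantity as a product of norms of orthogonal polynomials, whose dependence on the coupling constants~${\bf s}$ and on the discrete variable~$n$ is governed by the Toda lattice hierarchy. First I would diagonalize $M = U\Lambda U^\dagger$ and apply the Weyl integration formula. Since ${\rm tr}\,Q(M;{\bf s}) = \sum_{i=1}^n Q(\lambda_i;{\bf s})$ depends only on the eigenvalues, the integration over the unitary group factors out against ${\rm Vol}(n)$, and one is left with
\beq
Z \propto \frac{1}{n!}\int_{\mathbb{R}^n} \prod_{1\le i<j\le n}(\lambda_i-\lambda_j)^2 \prod_{i=1}^n w(\lambda_i;{\bf s})\, d\lambda_i, \qquad w(\lambda;{\bf s}) := e^{-\frac1\e Q(\lambda;{\bf s})},
\eeq
the Vandermonde factor being the square of $\det(\lambda_i^{\,j-1})_{i,j=1}^n$.

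Next I would introduce the monic orthogonal polynomials $p_k(\lambda) = \lambda^k + \cdots$ associated with the ${\bf s}$-dependent weight~$w$, normalized by $\int p_k\, p_l\, w\, d\lambda = h_k\,\delta_{kl}$. Expanding the Vandermonde determinant in this basis and using orthogonality, the eigenvalue integral collapses to $\prod_{k=0}^{n-1} h_k$. Restoring the explicit prefactor in~\eqref{globaldefZ}, this identifies $Z(x,{\bf s};\e) = \tau_n({\bf s};\e)$ with $\tau_n := \prod_{k=0}^{n-1} h_k$ (up to the $\e$- and $n$-dependent normalizing constant), and with $x = n\e$ providing the link between the 't Hooft coupling and the lattice site.

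I would then verify the Toda flows. Differentiating the orthogonality relations with respect to~$s_j$ brings down a factor $\frac1\e\lambda^j$ inside the weight, since $\p_{s_j} w = \frac1\e \lambda^j w$; combined with the three-term recurrence $\lambda\, p_k = p_{k+1} + \beta_k\, p_k + \gamma_k\, p_{k-1}$ (with $\gamma_k = h_k/h_{k-1}$) this produces evolution equations for $\log h_k$ and for the recurrence coefficients in which the flows $\p_{s_j}$ act as the time flows of the Toda lattice hierarchy, the discrete shift $n\mapsto n\pm1$ (equivalently $x\mapsto x\pm\e$) playing the role of the lattice translation. In particular, the standard integration-by-parts identities for the $p_k$ translate into the Hirota bilinear equations satisfied by~$\tau_n$, establishing that $Z$ is a Toda tau-function.

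The main obstacle I anticipate is not the derivation of the Toda flows per se, which is the classical orthogonal-polynomial story, but rather matching conventions so that $\tau_n$ is precisely a tau-function in the normalization of~\cite{CDZ, DY17, DZ04}. One must track the dispersion parameter~$\e$ consistently through the recurrence, fix the correspondence between the shifted tau-functions $\tau_{n\pm1}$ and the Toda wave operators, and interpret all identities as equalities of formal power series in~${\bf s}$, the matrix integral being defined perturbatively. This is where the explicit constant prefactor in~\eqref{globaldefZ} and the half-integer powers of~$x$ visible in $\F_g$ must be absorbed consistently into the claimed tau-structure, so that $Z$ becomes not merely \emph{a} tau-function but the \emph{particular} one singled out in~\cite{CDZ, DY17, DZ04}.
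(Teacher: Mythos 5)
The paper itself offers no proof of Theorem~\ref{thmguetoda}: it is imported from~\cite{DY17}. The orthogonal-polynomial reduction you outline --- diagonalization, $Z\propto\prod_{k=0}^{n-1}h_k$, Toda flows obtained by differentiating the orthogonality relations, the lattice shift $n\mapsto n\pm1$ realized as $x\mapsto x\pm\e$ --- is indeed the classical backbone of that proof (going back to Adler--van Moerbeke, Gerasimov et al.), so your route is the right one and, as far as it goes, correct.

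The genuine gap sits exactly at the point you defer as ``matching conventions.'' What your argument establishes is that $\prod h_k$ is a tau-function in the classical Hirota/bilinear sense. The theorem asserts something sharper: that $Z$, with the precise prefactor $(2\pi)^{-n}\e^{-1/12}/{\rm Vol}(n)$ in~\eqref{globaldefZ}, is a tau-function in the sense of~\cite{CDZ, DY17, DZ04}, i.e.\ for the Dubrovin--Zhang/matrix-resolvent tau-structure. That structure is defined by requiring that $\e^2\,\p_{s_j}\p_{s_k}\log\tau$ equal specific tau-symmetric two-point densities $\Omega_{j,k}$ built from the resolvent of the Lax operator, together with prescribed expressions for $\e\,\p_{s_j}\log\bigl(\tau(x+\e)/\tau(x)\bigr)$ and for $\tau(x+\e)\tau(x-\e)/\tau(x)^2$; and the claim is moreover that $Z$ is the \emph{particular} such tau-function attached to the solution with initial data $v(x,{\bf 0})=0$, $u(x,{\bf 0})=\log x$. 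None of this follows from the Hirota bilinear identities alone: these are additional normalization identities that must be verified (the constants $(2\pi)^{-n}$, $\e^{-1/12}$ and ${\rm Vol}(n)$ are fixed precisely by them), and they are what the present paper actually uses downstream --- \eqref{mapvuvr} and Lemmas~\ref{lemma2}, \ref{lemma1009twopt} are read off directly from this tau-structure. A complete proof must therefore either check the defining relations of the~\cite{DY17} tau-structure for $\prod h_k$ explicitly, or invoke the uniqueness (up to inessential factors) of the tau-function with the given initial data; your sketch names this step as the main obstacle but does not supply it.
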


Based on the DZ approach and Theorem~\ref{thmguetoda}, we gave~\cite{Y2} a proof of the following theorem, which 
gives the relationship between the GUE partition function and $M_{\rm \mathbb{P}^1}$.

\smallskip

\noindent {\bf Theorem A} (Dubrovin~\cite{Du09}) {\it 
For $g=0$, we have
\begin{align}
\mathcal{F}_0(x,{\bf s}) = \, &
\frac12\sum_{p,q\geq0} (p+1)! (q+1)! \Bigl(s_{p+1}-\frac12\delta_{p,1}\Bigr) \Bigl(s_{q+1}-\frac12\delta_{q,1}\Bigr) 
\Omega^{M_{\mathbb{P}^1},[0]}_{2,p;2,q}({\bf v}(x,{\bf s}))  \label{F0dub} \\
& + x\sum_{p\geq0} (p+1)! \Bigl(s_{p+1}-\frac12\delta_{p,1}\Bigr) \theta^{M_{\mathbb{P}^1}}_{2,p}({\bf v}(x,{\bf s}))  + \frac12 x^2 u(x,{\bf s}).  \nn
\end{align}
For $g\ge1$, the genus $g$ GUE free energy has the following representation:
\beq
\F_g(x,{\bf s}) = 
F_g^{M_{\mathbb{P}^1}}\Bigl(v(x,{\bf s}), u(x,{\bf s}), \dots, \frac{\p^{3g-2} v(x,{\bf s})}{\p x^{3g-2}}, \frac{\p^{3g-2} u(x,{\bf s})}{\p x^{3g-2}}\Bigr) + 
\delta_{g,1}\bigl(\zeta'(-1)-\tfrac1{24}\log(-1)\bigr). \label{jetguep1}
\eeq
Here
$\Omega^{M_{\mathbb{P}^1},[0]}_{\alpha,p;\beta,q}({\bf v})$ denote the two-point 
correlation functions for the principal hierarchy of~$M_{\mathbb{P}^1}$, 
the power series ${\bf v}(x,{\bf s})$ is defined in~\eqref{defvxs1214}--\eqref{defuxs1214}, and 
$F_g^{M_{\mathbb{P}^1}}(v, u, v_1,u_1, \dots, v_{3g-2}, u_{3g-2})$, $g\geq1$, 
denote the genus $g$ free energies in jets of the $\mathbb{P}^1$ Frobenius manifold.}

\smallskip

We recall that
for $g=1$, 
$$F_1^{M_{\mathbb{P}^1}}=\frac1{24} \log\bigl(v_1^2 - e^u u_1^2\bigr) - \frac1{24} u,$$
and for $g\geq2$, $F_g^{M_{\mathbb{P}^1}}$ has polynomial dependence in $v_2, u_2, \dots, v_{3g-2}, u_{3g-2}$
and rational dependence in $v_1,u_1$ \cite{DZ-norm} with 
coefficients being smooth functions of $v,u$.

The goal of this paper is to show that $Z(x,{\bf s};\e)$ is equal to part of the topological partition function of the NLS Frobenius manifold, 
which is given in Theorem~\ref{thmmain} below.

The rest of the paper is organized as follows. In Section~\ref{section2} we review the 
DZ loop equation for the NLS Frobenius manifold. In Section~\ref{section3} we 
prove Theorem~\ref{thmmain}. In Section~\ref{section4} we give concluding remarks. 
%, describing the role of the GUE partition function 
%in understanding 2d quantum gravity.
Review on the topological partition function of a semisimple Frobenius manifold is given in Appendix~\ref{appa}. 

\smallskip
\noindent {\bf Acknowledgements.} 
The work is partially supported by NSFC No.~12371254 and by the CAS Project for Young Scientists in Basic Research No. YSBR-032.

\section{The Dubrovin--Zhang loop equation for the NLS Frobenius manifold}\label{section2}
In this section, we give the explicit form of the DZ loop equation for the NLS Frobenius 
manifold $M_{\rm NLS}$. Note that we only need to specialize the general construction that Dubrovin and Zhang give in~\cite{DZ-norm}
 to the NLS Frobenius manifold $M_{\rm NLS}$ (which has the potential~\eqref{FNLS27}).
A brief review of Dubrovin--Zhang's construction is in Appendix~\ref{appa}.

Comparing with the notation in Appendix~\ref{appa}, we have for this Frobenius manifold
the flat coordinates ${\bf v}=(v^1,v^2)=(\varphi, \rho)$.
The invariant flat metric, the Euler vector field, and the intersection form of the NLS Frobenius manifold 
read explicitly as follows:
\beq
(\eta^{\alpha\beta}) = \begin{pmatrix} 0 & 1 \\ 1 & 0 \end{pmatrix}, \quad 
E=\varphi\p_\varphi+2 \rho \p_\rho, \quad 
(g^{\alpha\beta}) = \begin{pmatrix}
 2 & \varphi \\
\varphi & 2 \rho \\
\end{pmatrix}.
\eeq
The spectrum of this Frobenius manifold is given by~\cite{CvdLPS, Du93, DZ-norm}
\beq
\mu = \begin{pmatrix} 1/2 & 0 \\ 0 & -1/2 \end{pmatrix}, \quad R=\begin{pmatrix} 0 & 2 \\ 0 & 0 \end{pmatrix}.
\eeq

From the definition given in~\cite{DZ-norm}, one finds the explicit expressions for the Virasoro operators associated to the NLS Frobenius manifold:
\begin{align}
L^{M_{\rm NLS}}_{-1} = \, & \sum_{m\ge1} \tilde t^{1,m} \frac{\p}{\p t^{1,m-1}} + \sum_{m\ge1} \tilde t^{2,m} \frac{\p}{\p t^{2,m-1}} 
+ \frac{\tilde t^{1,0} \tilde t^{2,0}}{\e^2} , 
\label{viranlsm1}
\end{align}
\begin{align}
L^{M_{\rm NLS}}_0 = \, & \sum_{m\ge0} (1+m) \tilde t^{1,m} \frac{\p }{\p t^{1,m}} + 
\sum_{m\ge1} m \tilde t^{2,m} \frac{\p }{\p t^{2,m}} + 2\sum_{m\ge1} \tilde t^{2,m} \frac{\p }{\p t^{1,m-1}}  
+ \frac{\tilde t^{2,0} \tilde t^{2,0}}{\e^2}, \label{viranls0}
\end{align}
\begin{align}
L^{M_{\rm NLS}}_k = \, & \e^2 \sum_{m=1}^{k-1} m! (k-m)! \frac{\p^2}{\p t^{1,m-1} \p t^{1,k-m-1}} \nn\\
& + \sum_{m\ge1} \frac{(k+m)!}{(m-1)!} 
\biggl( \tilde t^{2,m} \frac{\p}{\p t^{2,k+m}} + \tilde t^{1,m-1}\frac{\p}{\p t^{1,k+m-1}} \biggr) \nn\\
 & + 2 \sum_{m\ge0} \alpha_k(m) \tilde t^{2,m} \frac{\p}{\p t^{1,k+m-1}}, \quad k\geq1. \label{viranlsk}
\end{align}
Here, $\tilde t^{\beta,m}=t^{\beta,m}-\delta^{m,1}\delta^{\beta,1}$ ($\beta=1,2$, $m\ge0$), $\alpha_k(0):=k!$, and for $m>0$, $\alpha_k(m):=\frac{(k+m)!}{(m-1)!} \sum_{j=m}^{k+m} \frac 1 j$.

The {\it topological partition function $Z^{M_{\rm NLS}, {\rm top}}({\bf t};\e)=e^{\F^{M_{\rm NLS}, {\rm top}}({\bf t};\e)}$ of the 
NLS Frobenius manifold} $M_{\rm NLS}$ is a particular power series, which 
satisfies the {\it Virasoro constraints} and the dilaton equation:
\begin{align}
& L^{M_{\rm NLS}}_k \bigl(Z^{M_{\rm NLS}, {\rm top}}({\bf t};\e)\bigr) = 0,\quad k\geq-1, \label{viraconstr1012} \\
& \sum_{\alpha=1}^2 \sum_{m\ge0} \tilde t^{\alpha,m} \frac{\p Z^{M_{\rm NLS}, {\rm top}}({\bf t};\e)}{\p t^{\alpha,m}} 
+ \e \frac{\p Z^{M_{\rm NLS}, {\rm top}}({\bf t};\e)}{\p \e}+ \frac{1}{12} Z^{M_{\rm NLS}, {\rm top}}({\bf t};\e) = 0. \label{dilatonnls1210}
\end{align}
Here ${\bf t}=(t^{\alpha,m})_{\alpha=1,\dots,2,\,m\ge0}$, and $\F^{M_{\rm NLS}}({\bf t};\e)$ is called the {\it topological free energy of the 
Frobenius manifold} $M_{\rm NLS}$. For the definition of the topological partition function 
of a semisimple Frobenius manifold see~\cite{DZ-norm} or Appendix~\ref{appa}.
We will give the explicit specialization of Dubrovin--Zhang's construction (see~\cite{DZ-norm} or Appendix~\ref{appa}) 
for the case of the NLS Frobenius manifold. 
Recall that we need to look for solution to~\eqref{viraconstr1012} with $\F^{M_{\rm NLS}, {\rm top}}({\bf t};\e)$ 
having the genus expansion:
\beq
\F^{M_{\rm NLS}, {\rm top}}({\bf t};\e) = \sum_{g\ge0} \e^{2g-2} \F^{M_{\rm NLS}, {\rm top}}_g({\bf t}),
\eeq
where $\F^{M_{\rm NLS}, {\rm top}}_g({\bf t})$ will be called the {\it genus $g$ topological free energy}. 
In particular, we require that $\F^{M_{\rm NLS}}_0({\bf t})$ is given by~\cite{Du96}
\beq\label{genuszeronls1025}
\F^{M_{\rm NLS}, {\rm top}}_0({\bf t}) = \sum_{\alpha,\beta=1}^2\sum_{m_1,m_2\ge0} 
\tilde t^{\alpha,m_1} \tilde t^{\beta, m_2} \Omega_{\alpha,m_1; \beta,m_2}^{M_{\rm NLS},[0]}\bigl({\bf v}_{\rm top}(\bt)\bigr),
\eeq
where $\Omega_{\alpha,m_1; \beta,m_2}^{M_{\rm NLS},[0]}({\bf v})$ are the two-point correlation functions for the principal hierarchy  
of~$M_{\rm NLS}$ (we choose the calibration according to~\cite{CDZ, CvdLPS}), and 
${\bf v}_{\rm top}(\bt)$ denotes the topological solution to the principal hierarchy associated to~$M_{\rm NLS}$.
See~Appendix~\ref{appa} for more details about $\Omega_{\alpha,m_1; \beta,m_2}^{[0]}({\bf v})$ and ${\bf v}_{\rm top}(\bt)$. 
It is shown in~\cite{DZ99} that the $\F^{M_{\rm NLS}, {\rm top}}_0({\bf t})$ constructed by~\eqref{genuszeronls1025} 
satisfies the genus zero part of equations~\eqref{viraconstr1012}.
Denote 
\beq
\Delta\F^{M_{\rm NLS}, {\rm top}}({\bf t};\e) = \sum_{g\ge1} \e^{2g-2} \F^{M_{\rm NLS}, {\rm top}}_g({\bf t}),
\eeq
and consider the following ansatz for the genus $g$ topological free energy $\F_g^{M_{\rm NLS}, {\rm top}}(\bt)$: 
\beq
\F_g^{M_{\rm NLS}, {\rm top}}(\bt) = F_g^{M_{\rm NLS}}\bigl(\varphi_{\rm top}(\bt), \rho_{\rm top}(\bt), \dots, 
\p_{t^{1,0}}^{3g-2} (\varphi_{\rm top}(\bt)), \p_{t^{1,0}}^{3g-2} (\rho_{\rm top}(\bt)) \bigr),
\eeq
where $F_g^{M_{\rm NLS}}(\varphi, \rho, \dots, \varphi_{3g-2}, \rho_{3g-2})$, $g\geq1$, are functions to be determined. 
Let us choose the $\lambda$-periods of the NLS Frobenius manifold:
\beq
p_1({\bf v};\lambda) = \frac12 \sqrt{(\varphi-\lambda)^2-4\rho}, \quad p_2({\bf v};\lambda) = \varphi-\lambda,
\eeq
with the corresponding Gram-matrix 
\beq
(G^{\alpha\beta}) \equiv \begin{pmatrix}
 -2 & 0 \\
 0 & \frac12 \\
\end{pmatrix}.
\eeq
The DZ loop equation for the NLS Frobenius manifold then explicitly reads:
\begin{align}
&  \sum_{r\ge0} \frac{\p \Delta F^{M_{\rm NLS}}}{\p \varphi_r} \p^r\Bigl(\frac{\varphi-\lambda}{D(\lambda)}\Bigr) - 
 2 \sum_{r\ge0} \frac{\p \Delta F^{M_{\rm NLS}}}{\p \rho_r} \p^r\Bigl(\frac{\rho}{D(\lambda)}\Bigr)  \nn\\
& - \frac12  \sum_{r\ge1} \frac{\p \Delta F^{M_{\rm NLS}}}{\p \rho_r} \sum_{k=1}^r 
\binom{r}{k} \p^{k-1}\Bigl(\frac{\varphi-\lambda }{\sqrt{D(\lambda)}}\Bigr) 
\p^{r-k+1} \Bigl(\frac{\varphi-\lambda }{\sqrt{D(\lambda)}}\Bigr) \nn\\
& +  \sum_{r\ge1} \frac{\p \Delta F^{M_{\rm NLS}}}{\p \varphi_r} \sum_{k=1}^r 
\binom{r}{k} \p^{k-1}\Bigl(\frac{\varphi-\lambda }{\sqrt{D(\lambda)}}\Bigr) \p^{r-k+1} 
\Bigl(\frac{1}{\sqrt{D(\lambda)}}\Bigr) \nn\\
& =  - \e^2 \sum_{k,\ell\ge0} 
\biggl(\frac{\p^2 \Delta F^{M_{\rm NLS}}}{\p \varphi_k \p \varphi_\ell} + 
\frac{\p \Delta F^{M_{\rm NLS}}}{\p \varphi_k}\frac{\p \Delta F^{M_{\rm NLS}}}{\p \varphi_\ell}\biggr) 
\p^{k+1}\Bigl(\frac{1}{\sqrt{D(\lambda)}}\Bigr)  \p^{\ell+1} \Bigl(\frac{1}{\sqrt{D(\lambda)}}\Bigr) \nn\\
& + \e^2 \sum_{k,\ell\ge0}  \biggl(\frac{\p^2 \Delta F^{M_{\rm NLS}}}{\p \varphi_k \p \rho_\ell} + 
\frac{\p \Delta F^{M_{\rm NLS}}}{\p \varphi_k}\frac{\p \Delta F^{M_{\rm NLS}}}{\p \rho_\ell}\biggr) 
\p^{k+1}\Bigl(\frac{1}{\sqrt{D(\lambda)}}\Bigr)  \p^{\ell+1} \Bigl(\frac{\varphi-\lambda }{\sqrt{D(\lambda)}}\Bigr) \nn\\
& - \frac{\e^2}4 \sum_{k,\ell\ge0}  \biggl(\frac{\p^2 \Delta F^{M_{\rm NLS}}}{\p \rho_{k} \p \rho_{\ell}} + 
\frac{\p \Delta F^{M_{\rm NLS}}}{\p \rho_k}\frac{\p \Delta F^{M_{\rm NLS}}}{\p \rho_\ell}\biggr) 
\p^{k+1}\Bigl(\frac{\varphi-\lambda }{\sqrt{D(\lambda)}}\Bigr)  \p^{\ell+1} \Bigl(\frac{\varphi-\lambda }{\sqrt{D(\lambda)}}\Bigr) \nn\\
& - \e^2 \sum_{k\ge0} \frac{\p \Delta F^{M_{\rm NLS}}}{\p \varphi_k} \p^{k+1}
\biggl( \frac{ ((\lambda-\varphi)^2+4 \rho)\varphi_1 + 4 (\lambda - \varphi) \rho_1}{D(\lambda)^3} \biggr) \nn\\
& - \e^2 \sum_{k\ge0} \frac{\p \Delta F^{M_{\rm NLS}}}{\p \rho_{k}} \p^{k+1}
\biggl( \frac{4 \rho (\lambda - \varphi)\varphi_1+ ((\lambda-\varphi)^2+4 \rho)\rho_1}{D(\lambda)^3} \biggr) - \frac{\rho}{D(\lambda)^4}, \label{loopnls30}
\end{align}
where $\p=\sum_{k\ge0} \varphi_{k+1} \frac{\p}{\p \varphi_{k}} + \sum_{k\ge0} \rho_{k+1} \frac{\p}{\p \rho_{k}}$, 
$
\Delta F^{M_{\rm NLS}} = \sum_{g\ge1} \e^{2g-2} F^{M_{\rm NLS}}_g$,
and 
\beq
D(\lambda) = (\lambda-\varphi)^2-4 \rho .
\eeq

According to Theorem~B of Appendix~\ref{appa}, the loop equation~\eqref{loopnls30} together with 
\beq\label{dilatonnlsjet}
\sum_{k\ge1} k\varphi_k \frac{\p F_g^{M_{\rm NLS}}}{\p \varphi_k} + \sum_{k\ge1} k\rho_k \frac{\p F_g^{M_{\rm NLS}}}{\p \rho_k}
= (2g-2) F_g^{M_{\rm NLS}} + \frac{\delta_{g,1}}{12} F_g^{M_{\rm NLS}}, \quad g\ge1,
\eeq
has a unique (up to a pure constant for $F_1^{M_{\rm NLS}}$) solution. In this way, we get the 
topological partition function of the NLS Frobenius manifold, that is  
\beq
Z^{M_{\rm NLS}, {\rm top}}({\bf t};\e) = 
\exp \bigl(\e^{-2}\F_0^{M_{\rm NLS}, {\rm top}}({\bf t})
+\Delta F^{M_{\rm NLS}}|_{\varphi_k \mapsto \p_{t^{1,0}}^k (\varphi_{\rm top}(\bt)),\rho_k\mapsto \p_{t^{1,0}}^k (\rho_{\rm top}(\bt)), k\ge0}\bigr).
\eeq

\section{The loop equation for the GUE partition function}\label{section3}
In this section, we derive the loop equation for the GUE partition function $Z(x,{\bf s};\e)$. 

Following~\cite{Du09, DZ-norm, DZ04, Y2} introduce 
\begin{align}
&\sum_{m\geq0} \theta^{M_{\mathbb{P}^1}}_{1,m}(v,u) z^m
:= -2e^{zv}\sum_{j\geq0}\Bigl(\gamma-\frac12 u+\psi(j+1)\Bigr)e^{j u} \frac{z^{2j}}{j!^2}, \label{theta1z}\\
&\sum_{m\geq0} \theta^{M_{\mathbb{P}^1}}_{2,m}(v,u) z^m := z^{-1} \biggl(\sum_{j\geq0} e^{j u+z v} \frac{z^{2j}}{j!^2}-1\biggr), \label{theta2z}
\end{align}
where $\gamma=0.57721\cdots$ is the {\it Euler--Mascheroni constant} and $\psi$ the {\it digamma function}. Denote by 
($v(x,{\bf s}), u(x,{\bf s})$) the unique power-series-in-${\bf s}$ solution to the following equations:
\begin{align}
& \sum_{m\geq0} ((m+1)!s_{m+1}-\delta_{m,1}) \frac{\p \theta^{M_{\mathbb{P}^1}}_{2,m}}{\p v}(v(x,{\bf s}), u(x,{\bf s})) = 0, \label{defvxs1214}\\
& x + \sum_{m\geq0} ((m+1)!s_{m+1}-\delta_{m,1}) \frac{\p \theta^{M_{\mathbb{P}^1}}_{2,m}}{\p u}(v(x,{\bf s}), u(x,{\bf s})) = 0, \label{defuxs1214}
\end{align}
Equivalently speaking, ($v(x,{\bf s}), u(x,{\bf s})$) is the unique power-series-in-${\bf s}$ solution to an infinite family PDEs of hydrodynamic type
\begin{align}
& \frac{\p v}{\p s_m} = m! \p_x \biggl(\frac{\p \theta^{M_{\mathbb{P}^1}}_{2, m}}{\p u}\biggr), \quad 
 \frac{\p u}{\p s_m} = m! \p_x \biggl(\frac{\p \theta^{M_{\mathbb{P}^1}}_{2, m}}{\p v}\biggr) \label{dispersionlesstoda}
\end{align}
with the initial condition $v(x,{\bf 0})=0, u(x,{\bf 0})=\log x$. Here $m\ge1$. According to~\cite{DZ-norm, DZ04} equations~\eqref{dispersionlesstoda} are 
part of the flows of the principal hierarchy (cf.~\eqref{principalh520}) (often called the {\it stationary flows}) associated to the $\mathbb{P}^1$ Frobenius manifold $M_{\mathbb{P}^1}$ whose potential is given by~\eqref{FP11009}. 

Let us define 
\beq \varphi=\p_x \p_{s_1} (\F_0(x, {\bf s})), \quad  \rho=\p_{s_1}^2 (\F_0(x, {\bf s})). \eeq 
Then 
by Theorem~\ref{thmguetoda} and by the definition of the
 tau-structure for the Toda lattice hierarchy~\cite{DY17, Y1, Y2} (see also~\cite{F, mana}) we know that 
\beq\label{mapvuvr}
\varphi(x, {\bf s}) = v(x, {\bf s}), \quad \rho(x, {\bf s}) = e^{u(x,{\bf s})}.
\eeq 
Observe that this is precisely the Legendre-type transformation \cite{Du96} relating $M_{\rm NLS}$ 
and $M_{\mathbb{P}^1}$ that we mentioned in the Introduction. 
The dispersionless limit of the Toda lattice equation
\beq\label{sx1011}
\frac{\p u}{\p y}= \frac{\p v}{\p x}, \quad \frac{\p v}{\p y}= e^u\frac{\p u}{\p x}
\eeq
then yields
\beq
\frac{\p \varphi}{\p y} = \frac{\p \rho}{\p x}, \quad \frac{\p \rho}{\p y} = \rho \frac{\p \varphi}{\p x} \quad 
\Leftrightarrow \quad 
\frac{\p \rho}{\p x} = \frac{\p \varphi}{\p y}, \quad  \frac{\p \varphi}{\p x} = \frac1\rho \frac{\p \rho}{\p y}. \label{xymap35}
\eeq
Here and below, $y=s_1$.
We have the following lemma. 
\begin{lemma}\label{lemma1}
There exist functions 
$$F_g(\varphi, \rho, \varphi_1,\rho_1, \dots, \varphi_{3g-2}, \rho_{3g-2}), \quad g\geq1,$$
such that
\beq
\F_g(x,{\bf s}) = F_g\biggl(\varphi(x,{\bf s}), \rho(x,{\bf s}),\dots, \frac{\p^{3g-2}\varphi(x,{\bf s})}{\p y^{3g-2}} , \frac{\p^{3g-2}\rho(x,{\bf s})}{\p y^{3g-2}}
\biggr) + 
\delta_{g,1}\zeta'(-1). \label{jetguenls}
\eeq
Moreover, for $g=1$, 
\beq
F_1(\varphi, \rho, \varphi_1,\rho_1)=\frac1{24} \log\Bigl(\varphi_1^2- \frac1{\rho} \rho_1^2\Bigr) - \frac1{12} \log \rho,
\eeq
and for $g\ge2$, 
$F_g$ has polynomial dependence in $\varphi_2,\rho_2,\dots,\varphi_{3g-2},\rho_{3g-2}$ and 
has rational dependence in $\varphi_1, \rho_1$ with coefficients depending smoothly in $\varphi, \rho$. 
\end{lemma}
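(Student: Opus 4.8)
The plan is to deduce the statement from Theorem~A by performing the Legendre-type change of variables \eqref{mapvuvr} at the level of jet coordinates, trading $x$-derivatives of $(v,u)$ for $y$-derivatives of $(\varphi,\rho)$. Concretely, Theorem~A (formula \eqref{jetguep1}) already expresses $\F_g(x,{\bf s})$, for $g\ge1$, as $F_g^{M_{\mathbb{P}^1}}$ evaluated on the $x$-jets $\p_x^j v,\p_x^j u$, $0\le j\le 3g-2$, plus the explicit constant $\delta_{g,1}(\zeta'(-1)-\tfrac1{24}\log(-1))$. Using $v=\varphi$ and $u=\log\rho$ from \eqref{mapvuvr}, it remains to re-express each $\p_x^j v$ and $\p_x^j u$ as a function of the $y$-jets $\varphi,\rho,\varphi_1,\rho_1,\dots$ and to check that the resulting substitution preserves the jet order $3g-2$ together with the required polynomial, rational, and smooth structure.

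The key step is to promote the flow relations \eqref{xymap35} to a derivation $\p_x$ on the ring of $y$-jets. Since $x$ and $y=s_1$ are independent times, $\p_x$ and $\p_y$ commute, so \eqref{xymap35} forces $\p_x\rho_k=\varphi_{k+1}$ and $\p_x\varphi_k=\p_y^k(\rho_1/\rho)$ for all $k\ge0$; one checks directly that $[\p_x,\p_y]=0$ on generators, so this prescription is consistent. Assigning to $\varphi_k$ and $\rho_k$ the degree $k$, both formulas show that $\p_x$ raises degree by one and maps functions of $y$-jets of order $\le m$ to functions of order $\le m+1$, and that the only denominators it introduces are powers of $\rho$. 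By induction on $j$ it then follows that $\p_x^j\varphi$ and $\p_x^j u$ are homogeneous of degree $j$, involve $y$-jets only up to order $j$, are polynomial in the jets $\{\varphi_i,\rho_i:i\ge2\}$, and are rational in $\varphi_1,\rho_1$ with denominators powers of $\rho$. In particular $\p_x^j v,\p_x^j u$ for $j\le 3g-2$ lie in $y$-jets of order $\le 3g-2$. Substituting these expressions into $F_g^{M_{\mathbb{P}^1}}$, and invoking the recalled structure of $F_g^{M_{\mathbb{P}^1}}$ (polynomial in $v_2,u_2,\dots$, rational in $v_1,u_1$, smooth in $v,u$) together with $\p_x v=\rho_1/\rho$, $\p_x u=\varphi_1/\rho$ and $u=\log\rho$, shows that the composite is polynomial in $\varphi_2,\rho_2,\dots,\varphi_{3g-2},\rho_{3g-2}$, rational in $\varphi_1,\rho_1$, and smooth in $\varphi,\rho$; this is the function $F_g$ claimed for $g\ge2$.

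For $g=1$ I would carry out the substitution explicitly. With $\p_x v=\rho_1/\rho$ and $\p_x u=\varphi_1/\rho$ one finds $v_1^2-e^u u_1^2=-\tfrac1\rho(\varphi_1^2-\rho_1^2/\rho)$, so that $F_1^{M_{\mathbb{P}^1}}=\tfrac1{24}\log(\varphi_1^2-\rho_1^2/\rho)-\tfrac1{12}\log\rho+\tfrac1{24}\log(-1)$. Adding the constant $\zeta'(-1)-\tfrac1{24}\log(-1)$ from Theorem~A, the two $\log(-1)$ contributions cancel, leaving exactly $\F_1=F_1(\varphi,\rho,\varphi_1,\rho_1)+\zeta'(-1)$ with $F_1$ as stated; for $g\ge2$ the $\delta_{g,1}$ terms are absent and no further constants are produced by the purely differential-algebraic substitution.

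The main obstacle is the bookkeeping in the second paragraph: one must verify that converting $x$-jets of order $\le 3g-2$ into $y$-jets via the nonlinear relations \eqref{xymap35} neither raises the jet order nor destroys the polynomial-in-higher-jets, rational-in-$(\varphi_1,\rho_1)$, smooth-in-$(\varphi,\rho)$ form. The degree and order grading makes the order bound transparent, and tracking that $\p_x$ introduces only powers of $\rho$ in denominators secures the structural claims; the $g=1$ case is then the only place where a genuine computation, and the fortunate cancellation of $\log(-1)$, is needed.
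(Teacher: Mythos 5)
Your proposal is correct and follows essentially the same route as the paper's (very terse) proof: use the identification \eqref{mapvuvr} together with the dispersionless Toda relations \eqref{xymap35} to rewrite the $x$-jets of $(v,u)$ as functions of the $y$-jets of $(\varphi,\rho)$, and then substitute into the jet representation \eqref{jetguep1} of Theorem~A. Your write-up in fact supplies more detail than the paper does, in particular the explicit $g=1$ computation with the cancellation of the $\log(-1)$ constants and the bookkeeping showing the jet order and the polynomial/rational structure are preserved.
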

\begin{proof}
By using~\eqref{mapvuvr} and~\eqref{sx1011} iteratively, 
we obtain an invertible map between 
the $x$-derivatives of $u,v$ and the $s$-derivatives of $\varphi, \rho$. 
Then~\eqref{jetguep1} implies~\eqref{jetguenls}. The lemma is proved.
\end{proof}

Following~\cite{DY17}, introduce the loop operator 
\beq
\nabla(\lambda)=\sum_{k\ge 1} \frac{\p}{\p s_{k}} \lambda^{-k-1}
\eeq
\begin{lemma}\label{lemma2}
The following identities hold:
\begin{align}
& \nabla(\lambda) (\F_{0x}) = \frac1{\sqrt{D(\lambda)}}-\frac1{\lambda}, \label{nablaf0x}\\ 
& \nabla(\lambda)^2 (\F_0) = \frac{\sqrt{D(\lambda)}}4 \p_\lambda^2\Bigl(\frac1{\sqrt{D(\lambda)}}\Bigr) + \frac{(\lambda-v)}{2\sqrt{D(\lambda)}}  \p_\lambda\Bigl(\frac1{\sqrt{D(\lambda)}}\Bigr) , \\
& \nabla(\lambda) (\F_{0y}) = \frac{-1+\sqrt{1+4 e^u /D(\lambda)}}{2} = \frac{-1+(\lambda-v)/\sqrt{D(\lambda)}}{2},
\label{nablaf0y}
\end{align}
where $D(\lambda)=(\lambda-v)^2-4e^u=(\lambda-\varphi)^2-4\rho$.
\end{lemma}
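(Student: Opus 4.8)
The plan is to read each identity as a generating series in $\lambda^{-1}$ of genus-zero two-point correlators and to sum it in closed form. By Theorem~\ref{thmguetoda} together with Dubrovin's formula~\eqref{F0dub}, the mixed second derivatives of $\F_0$ are factorials times the principal-hierarchy two-point functions of $M_{\mathbb{P}^1}$: using $x=t^{1,0}$, $\p_{s_k}=k!\,\p_{t^{2,k-1}}$ and the standard relation $\Omega^{M_{\mathbb{P}^1},[0]}_{1,0;\beta,q}=\theta^{M_{\mathbb{P}^1}}_{\beta,q}$, one has $\p_x\p_{s_k}\F_0=k!\,\theta^{M_{\mathbb{P}^1}}_{2,k-1}(v,u)$ and $\p_{s_j}\p_{s_k}\F_0=j!\,k!\,\Omega^{M_{\mathbb{P}^1},[0]}_{2,j-1;2,k-1}(v,u)$. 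The engine of the proof is the Borel--Laplace transform $z^m\mapsto m!\,\lambda^{-m-1}$ applied to~\eqref{theta2z}. Writing that generating function as $\sum_{m\ge0}\theta^{M_{\mathbb{P}^1}}_{2,m}z^m=z^{-1}(e^{zv}I_0(2ze^{u/2})-1)$ with $I_0$ the modified Bessel function, and using $\int_0^\infty e^{-pz}I_0(az)\,dz=(p^2-a^2)^{-1/2}$, I would compute in closed form $\Theta(\lambda):=\sum_{m\ge0}m!\,\lambda^{-m-1}\theta^{M_{\mathbb{P}^1}}_{2,m}$, obtaining
\beq
\p_v\Theta=\frac1{\sqrt{D(\lambda)}},\qquad \p_u\Theta=\frac12\Bigl(\frac{\lambda-v}{\sqrt{D(\lambda)}}-1\Bigr),\qquad -\p_\lambda\Theta=\frac1{\sqrt{D(\lambda)}}-\frac1\lambda .
\eeq
These are to be read as identities of formal series in $\lambda^{-1}$ with coefficients in functions of $(v,u)$, justified termwise by $\int_0^\infty e^{-\lambda z}z^m\,dz=m!\,\lambda^{-m-1}$ and the large-$\lambda$ expansion of $D(\lambda)^{-1/2}$.

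Granting this, \eqref{nablaf0x} is immediate: $\nabla(\lambda)(\F_{0x})=\sum_{k\ge1}\lambda^{-k-1}k!\,\theta^{M_{\mathbb{P}^1}}_{2,k-1}=-\p_\lambda\Theta=1/\sqrt{D(\lambda)}-1/\lambda$. For~\eqref{nablaf0y} I would use the relation $\Omega^{M_{\mathbb{P}^1},[0]}_{2,m;2,0}=\p_u\theta^{M_{\mathbb{P}^1}}_{2,m+1}$ (which follows from the flow~\eqref{dispersionlesstoda} and $\theta^{M_{\mathbb{P}^1}}_{2,0}=v$), so that $\nabla(\lambda)(\F_{0y})=\sum_{k\ge1}\lambda^{-k-1}k!\,\p_u\theta^{M_{\mathbb{P}^1}}_{2,k}=\p_u\Theta$, which is exactly the stated value. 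As consistency checks, the leading $\lambda^{-2}$ coefficients reproduce $\p_x\p_{s_1}\F_0=v=\varphi$ and $\p_{s_1}^2\F_0=e^u=\rho$.

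For the middle identity I would first record the companion formulas, read off from~\eqref{dispersionlesstoda}, $\nabla(\lambda)v=\p_x\p_u\Theta=\frac12\p_x\!\bigl((\lambda-v)/\sqrt{D(\lambda)}\bigr)$ and $\nabla(\lambda)u=\p_x\p_v\Theta=\p_x\bigl(1/\sqrt{D(\lambda)}\bigr)$. A direct computation shows that the stated right-hand side of $\nabla(\lambda)^2(\F_0)$ collapses to $\rho/D(\lambda)^2$. I then verify $\p_x\Delta=\p_y\Delta=0$ for $\Delta:=\nabla(\lambda)^2(\F_0)-\rho/D(\lambda)^2$: indeed $\p_x\nabla(\lambda)^2\F_0=\nabla(\lambda)\bigl(1/\sqrt{D(\lambda)}\bigr)$ and $\p_y\nabla(\lambda)^2\F_0=\frac12\nabla(\lambda)\bigl((\lambda-v)/\sqrt{D(\lambda)}\bigr)$ by~\eqref{nablaf0x}--\eqref{nablaf0y}, and expanding both by the chain rule through $\nabla(\lambda)v$, $\nabla(\lambda)u$ and the dispersionless Toda relations~\eqref{xymap35} one matches $\p_x(\rho/D^2)$ and $\p_y(\rho/D^2)$. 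Since $\Delta$ is a function of $(v,u,\lambda)$ and $(x,y)\mapsto(v,u)$ is a local diffeomorphism (Jacobian $v_x^2-e^u u_x^2$), $\Delta$ depends on $\lambda$ alone. Finally, by the quasi-homogeneity of the two-point functions (so that $\deg\Omega^{M_{\mathbb{P}^1},[0]}_{2,j-1;2,k-1}=j+k$ with $\deg v=\deg\lambda=1$, $\deg\rho=2$), every term of $\nabla(\lambda)^2\F_0=\sum_{j,k\ge1}j!\,k!\,\lambda^{-j-k-2}\Omega^{M_{\mathbb{P}^1},[0]}_{2,j-1;2,k-1}$ has total degree $-2$, as does $\rho/D^2$; thus $\Delta$ is a homogeneous function of $\lambda$ alone of degree $-2$, i.e.\ a multiple of $\lambda^{-2}$, and $\Delta=O(\lambda^{-4})$ forces $\Delta\equiv0$. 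Re-expressing $\rho/D(\lambda)^2$ in the $\p_\lambda$-form then yields the printed right-hand side.

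The main obstacle is precisely this last step: integrating $\p_x\nabla(\lambda)^2\F_0=\nabla(\lambda)(1/\sqrt{D(\lambda)})$ leaves an undetermined $\lambda$-dependent constant, and it is the homogeneity count (equivalently, matching the full large-$\lambda$ expansion, which I have checked through order $\lambda^{-5}$ using $\Omega^{M_{\mathbb{P}^1},[0]}_{2,0;2,0}=\rho$ and $\Omega^{M_{\mathbb{P}^1},[0]}_{2,0;2,1}=ve^u$) that removes it. A secondary technical point is the bookkeeping of the factorial normalization $s_k\leftrightarrow t^{2,k-1}$ and the careful justification of the Borel--Laplace identities for $\Theta$ as equalities of formal series in $\lambda^{-1}$.
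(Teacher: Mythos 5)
Your proof is correct, but it takes a genuinely different route from the paper's. The paper disposes of the lemma in one sentence: by Theorem~\ref{thmguetoda} the GUE partition function is a tau-function of the Toda lattice hierarchy, so the generating series $\nabla(\lambda)(\F_{0x})$, $\nabla(\lambda)(\F_{0y})$, $\nabla(\lambda)^2(\F_0)$ are read off from the matrix-resolvent form of the Toda tau-structure, whose dispersionless limit $\e\to0$ is an explicit algebraic function of $v$, $e^u$, $\lambda$ (the computation at the end of Section~2.2 of~\cite{Y2}); in particular the two-point generating series comes out in closed form at once, with no integration constant to fix. You instead start from Theorem~A, identify the second derivatives of $\F_0$ with the $\mathbb{P}^1$ two-point functions (the standard genus-zero cancellation coming from \eqref{defvxs1214}--\eqref{defuxs1214}, which you should state explicitly), and sum the calibration~\eqref{theta2z} in closed form via the Borel--Laplace transform of the Bessel series. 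Your formulas $\p_v\Theta=D^{-1/2}$, $\p_u\Theta=\tfrac12((\lambda-v)D^{-1/2}-1)$, $-\p_\lambda\Theta=D^{-1/2}-\lambda^{-1}$ all check out, the index bookkeeping in \eqref{nablaf0x} and \eqref{nablaf0y} is right (including the vanishing of the $m=0$ term $\p_u\theta_{2,0}$), and the stated right-hand side of the middle identity does collapse to $\rho/D(\lambda)^2$. For that middle identity your route is more roundabout: you integrate the $x$- and $y$-derivatives and kill the residual $\lambda$-dependent constant by the quasi-homogeneity of $\Omega^{M_{\mathbb{P}^1},[0]}_{2,j-1;2,k-1}$, whereas the matrix-resolvent formula (or, equivalently, a double Borel--Laplace transform applied directly to the generating relation~\eqref{deftwopoint511}) would produce $\nabla(\lambda)^2(\F_0)=\rho/D(\lambda)^2$ in one step. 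What your approach buys is independence from the matrix-resolvent machinery, using only the $\mathbb{P}^1$ calibration and classical Bessel integrals, at the price of the extra homogeneity bookkeeping; both arguments ultimately rest on the same Toda/$\mathbb{P}^1$ structure of $Z(x,{\bf s};\e)$.
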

\begin{proof}
Based on Theorem~\ref{thmguetoda}, using the tau-structure for the Toda lattice hierarchy given by the matrix-resolvent 
formula~\cite{DY17, Y1, Y2} and taking the dispersionless limit ($\epsilon\to0$), 
we obtain these identities, just as we did at the end of the Section 2.2 of~\cite{Y2}.
\end{proof}

\begin{lemma}\label{lemma1009twopt}
The following formulas hold:
\begin{align}
& \nabla(\lambda) (\varphi) = \p_y\Bigl(\frac1{\sqrt{D(\lambda)}}\Bigr), \\
& \nabla(\lambda) (\rho) = \p_y\Bigl(\frac{\lambda-\varphi}{2\sqrt{D(\lambda)}}\Bigr), \\
& \nabla(\lambda)\Bigl(\frac1{\sqrt{D(\lambda)}}\Bigr) = \frac{((\lambda -\varphi)^2+4\rho) \varphi_y+4(\lambda -\varphi) \rho_y }{\bigl((\lambda -\varphi )^2-4 \rho\bigr)^3} ,\\
& \nabla(\lambda)\Bigl(\frac{\varphi-\lambda}{\sqrt{D(\lambda)}}\Bigr) = \frac{-8\rho(\lambda -\varphi) \varphi_y - (8\rho+2(\lambda -\varphi)^2) \rho_y}{\bigl((\lambda -\varphi )^2-4 \rho\bigr)^3} .
\end{align}
\end{lemma}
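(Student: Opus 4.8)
The plan is to exploit two structural facts about the loop operator $\nabla(\lambda)=\sum_{k\ge1}\lambda^{-k-1}\frac{\p}{\p s_k}$. First, since $x,s_1,s_2,\dots$ are independent variables, $\nabla(\lambda)$ commutes with $\p_x$ and with $\p_y=\p_{s_1}$. Second, $\nabla(\lambda)$ is a derivation in the $s$-variables that does not touch the explicit $\lambda$-dependence of its argument; hence, acting on any expression assembled from $\varphi(x,{\bf s})$, $\rho(x,{\bf s})$ and $\lambda$, it obeys the chain rule with $\lambda$ held fixed. Concretely, for a function $f(\varphi,\rho;\lambda)$ one has $\nabla(\lambda)(f)=f_\varphi\,\nabla(\lambda)(\varphi)+f_\rho\,\nabla(\lambda)(\rho)$, and the same identity holds with $\nabla(\lambda)$ replaced by $\p_y$.

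For the first two identities I would reduce to Lemma~\ref{lemma2}. Since $\varphi=\p_x\p_y(\F_0)$ and $\rho=\p_y^2(\F_0)$, commuting $\nabla(\lambda)$ past the coordinate derivatives gives $\nabla(\lambda)(\varphi)=\p_y\bigl(\nabla(\lambda)(\F_{0x})\bigr)$ and $\nabla(\lambda)(\rho)=\p_y\bigl(\nabla(\lambda)(\F_{0y})\bigr)$. Substituting the expressions for $\nabla(\lambda)(\F_{0x})$ and $\nabla(\lambda)(\F_{0y})$ from Lemma~\ref{lemma2}, the $\lambda$-independent terms $-1/\lambda$ and $-1/2$ are annihilated by $\p_y$, and using $v=\varphi$ one obtains the first two formulas at once.

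For the remaining two identities I would apply the chain rule described above, treating $1/\sqrt{D(\lambda)}$ and $(\varphi-\lambda)/\sqrt{D(\lambda)}$ as functions of $\varphi,\rho$. From $D=(\lambda-\varphi)^2-4\rho$ one computes $\p_\varphi(D^{-1/2})=(\lambda-\varphi)D^{-3/2}$ and $\p_\rho(D^{-1/2})=2D^{-3/2}$, so that
\beq
\nabla(\lambda)\Bigl(\frac1{\sqrt{D(\lambda)}}\Bigr)=(\lambda-\varphi)D^{-3/2}\,\nabla(\lambda)(\varphi)+2D^{-3/2}\,\nabla(\lambda)(\rho).
\eeq
Into this I substitute the first two formulas, rewritten by the same $\p_y$-chain-rule as $\nabla(\lambda)(\varphi)=D^{-3/2}\bigl((\lambda-\varphi)\varphi_y+2\rho_y\bigr)$ and $\nabla(\lambda)(\rho)=D^{-3/2}\bigl(2\rho\varphi_y+(\lambda-\varphi)\rho_y\bigr)$. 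Collecting terms over the common denominator $D^3$ yields the third formula; the fourth then follows from the product rule $\nabla(\lambda)\bigl((\varphi-\lambda)D^{-1/2}\bigr)=\nabla(\lambda)(\varphi)\,D^{-1/2}+(\varphi-\lambda)\,\nabla(\lambda)(D^{-1/2})$ together with the already-derived third identity.

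I do not expect a genuine conceptual obstacle: once the commutation/derivation principle is in place, everything reduces to Lemma~\ref{lemma2} and elementary differentiation of $D(\lambda)^{-1/2}$. The only point that needs care is the algebraic bookkeeping in the last two identities — expressing every partial result over the single denominator $D^3$ and checking that the cubic-in-$(\lambda-\varphi)$ contributions cancel so that the numerators reduce to the stated linear combinations of $\varphi_y$ and $\rho_y$. I would also make explicit, to forestall any ambiguity, that $\nabla(\lambda)$ acts only on the $s$-dependence entering through $\varphi,\rho$ and leaves the spectral parameter $\lambda$ inert.
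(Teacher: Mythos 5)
Your proof is correct and follows essentially the same route as the paper, which obtains the first two formulas from the definitions of $\varphi$, $\rho$, $\nabla(\lambda)$ together with \eqref{nablaf0x} and \eqref{nablaf0y}, and the last two from the Leibniz rule applied to functions of $\varphi,\rho$ with $\lambda$ inert. The explicit algebra you carry out (the intermediate expressions $\nabla(\lambda)(\varphi)=D^{-3/2}((\lambda-\varphi)\varphi_y+2\rho_y)$ and $\nabla(\lambda)(\rho)=D^{-3/2}(2\rho\varphi_y+(\lambda-\varphi)\rho_y)$, and the cancellations over the denominator $D^3$) checks out.
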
 
\begin{proof}
Using the definitions of $\varphi, \rho$, $\nabla(\lambda)$ and using \eqref{nablaf0x}, \eqref{nablaf0y}, we 
find the validity of the first two formulas. The third and the fourth formulas then follow from the Leibniz rule. 
\end{proof}

Let us now derive the loop equation for the GUE partition function. 
Let $\Delta F=\sum_{g\ge1} \e^{2g-2} F_g$. 
It is well known that the GUE partition function $Z(x,{\bf s};\e)$ satisfies the dilaton equation 
\beq\label{GUEdilaton1210}
\sum_{j\geq1} \tilde s_j \frac{\p Z(x,{\bf s};\e)}{\p s_{j}} + \e \frac{\p Z(x,{\bf s};\e)}{\p \e}+ \frac{1}{12} Z(x,{\bf s};\e) = 0 
\eeq
and the Virasoro constraints
\beq\label{eqn:vira-GUE-x}
L_k^{\rm GUE} (Z(x, {\bf s}; \e)) = 0 \,, \quad k\geq -1 \,,
\eeq
where the operators $L^{\rm GUE}_k$, $k\geq-1$ are defined by
\begin{align}
%L_m \:= \sum_{j\geq2} j \tilde s_j \frac{\p}{\p s_{j-1}} + \frac{x s_1}{\hbar^2} , 
L^{\rm GUE}_{-1} = & \sum_{j\geq2} j \tilde s_j \frac{\p}{\p s_{j-1}} 
+\frac{x s_1}{\e^2} , \nn\\
L^{\rm GUE}_0 = & 
\sum_{j\geq1} j \tilde s_j \frac{\p}{\p s_{j}} + \frac{x^2}{\e^2}, \nn\\
L^{\rm GUE}_k = & \e^2 \sum_{j=1}^{k-1} \frac{\p^2}{\p s_j \p s_{k-j}} +  2x \frac{\p}{\p s_k} 
%+  \delta_{k\ge0}\tilde s_1 \frac{\p}{\p s_{1+k}} 
+ \sum_{j\geq1} j \tilde s_j \frac{\p}{\p s_{j+k}}, \quad k\ge1. \nn
\end{align}
Here, $\tilde s_j=s_j-\frac{\delta_{j,2}}2 $.
See e.g.~\cite{AvM, GMMMO91, MMMM91, Mo94}.
It should be noted that the operators $L_k^{\rm NLS}$ (given by~\eqref{viranlsm1}--\eqref{viranlsk}) become $L^{\rm GUE}_k$, $k\ge-1$, under  
$t^{2,m} =  x \delta_{m,0}$, $t^{1,m}= (m+1)! s_{m+1}$, $m\ge0$.

Introduce the operator
\beq
\mu(\lambda)= 
\frac{x}{\lambda}+\sum_{\ell\ge 1} \frac{\ell}2\tilde s_{\ell} \lambda^{\ell-1},
\eeq
and define
\beq
T(\lambda) \;\;=\;\; : \biggl(\e \nabla(\lambda) + \frac{\mu(\lambda)}{\e}\biggr)^2:  \;\;=\;\;  
\e^2 \nabla(\lambda)^2+2\mu(\lambda)\nabla(\lambda) + \frac{\mu(\lambda)^2}{\e^2}.
\eeq
It is straightforward to check that 
\beq
\sum_{m\geq-1} \frac{L_m^{\rm GUE}}{\lambda^{m+2}}  = T(\lambda)_{-}.
\eeq 
Here $(\,)_-$ means taking terms with negative powers in~$\lambda$.
The Virasoro constraints~\eqref{eqn:vira-GUE-x} can then be written equivalently as
\beq\label{idtz25}
T(\lambda)_{-} (Z(x,{\bf s};\e)) = 0.
\eeq
By taking the coefficients of~$\e^{-2}$ on both sides of~\eqref{idtz25} divided by~$Z(x,{\bf s};\e)$ we obtain 
\beq\label{idf025}
2(\mu(\lambda)\nabla(\lambda))_{-} (\F_0) + (\nabla(\lambda)(\F_0) )^2 + \frac{x s_1}{\lambda} + \frac{x^2}{\lambda^2} = 0.
\eeq
By taking the coefficients of the nonnegative powers of~$\e$ on both sides of~\eqref{idtz25} we obtain 
\begin{align}
& 2(\mu(\lambda)\nabla(\lambda))_{-} (\Delta \F)+ 
\e^2 \bigl( (\nabla(\lambda)(\Delta \F))^2+\nabla(\lambda)^2(\Delta \F) \bigr) \nn\\
& \qquad \qquad + 2 \nabla(\lambda)(\F_0) \nabla(\lambda) (\Delta \F) + \nabla(\lambda)^2(\F_0) = 0. \label{highernlsloop528}
\end{align}
Namely, 
\begin{align}
& 2\sum_{k\geq0} ((\mu(\lambda)\nabla(\lambda))_{-}(\rho_k) 
+ \nabla(\lambda)(\F_0) \nabla(\lambda)(\rho_k)) \frac{\p \Delta \F}{\p \rho_k} \nn\\
& \quad + 2\sum_{k\geq0} ((\mu(\lambda)\nabla(\lambda))_{-}(\varphi_k) 
+ \nabla(\lambda)(\F_0) \nabla(\lambda)(\varphi_k)) \frac{\p \Delta \F}{\p \varphi_k}  \nn\\
& \quad + \e^2 \bigl((\nabla(\lambda)(\Delta \F))^2+\nabla(\lambda)^2(\Delta \F) \bigr) 
+ \nabla(\lambda)^2 (\F_0) = 0. \label{namely1010}
\end{align}

Differentiating the identity~\eqref{idf025} with respect to~$y$, we find
\beq
\nabla(\lambda)(\F_0) + 2(\mu(\lambda)\nabla(\lambda))_{-} (\F_{0y}) + 2\nabla(\lambda)(\F_0) \nabla(\lambda)(\F_{0y}) + \frac{x}{\lambda} = 0.
\eeq
Further differentiating this equation with respect to~$x$ and~$y$, respectively, we obtain 
\begin{align}
&\nabla(\lambda)(\F_{0x}) + \frac{2}\lambda \nabla(\lambda)(\F_{0y}) 
+ 2(\mu(\lambda)\nabla(\lambda))_{-} (\F_{0yx}) \nn\\
& \quad + 2\nabla(\lambda)(\F_{0x})\nabla(\lambda)(\F_{0y}) 
+ 2\nabla(\lambda)(\F_0) \nabla(\lambda)(\F_{0yx}) + \frac{1}{\lambda} = 0,\\
& 2 \nabla(\lambda) (\F_{0y}) + 2(\mu(\lambda)\nabla(\lambda))_{-} (\F_{0yy}) 
+ 2 (\nabla(\lambda)(\F_{0y}))^2 + 2\nabla(\lambda)(\F_{0}) \nabla(\lambda)(\F_{0yy}) = 0.
\end{align}
Taking the $y$-derivatives successively in the above two equations we find 
\begin{align}
& \nabla(\lambda)(\F_0) \nabla(\lambda)(\varphi_k) + (\mu(\lambda)\nabla(\lambda))_{-} (\varphi_k)  \nn\\
&  \quad = - \frac12\p_y^{k}\biggl(\frac1{\sqrt{D(\lambda)}} \biggl(k+\frac{\lambda-v}{\sqrt{D(\lambda)}}\biggr)\biggr) \nn\\ 
&  \quad\quad - \sum_{m=1}^{k} \binom{k}{m} \p_y^{m-1} \biggl(\frac{\lambda-v-\sqrt{D(\lambda)}}{2\sqrt{D(\lambda)}}\,\biggr) \p_y^{k-m+1} \Bigl(\frac1{\sqrt{D(\lambda)}}\Bigr), \label{nonlocal11010}\\
 & \nabla(\lambda)(\F_0) \nabla(\lambda)(\rho_k) + (\mu(\lambda)\nabla(\lambda))_{-} (\rho_k) \nn\\
 &  \quad =  - \frac12\p_y^k\biggl(\biggl(\frac{\lambda-v-\sqrt{D(\lambda)}}{2\sqrt{D(\lambda)}} \, \biggr)\biggl(1+k+\frac{\lambda-v}{\sqrt{D(\lambda)}}\biggr)\biggr) \nn\\
 & \quad \quad - \sum_{m=1}^{k} \binom{k}{m} \p_y^{m-1} \biggl(\frac{\lambda-v-\sqrt{D(\lambda)}}{2\sqrt{D(\lambda)}} \,\biggr) \p_y^{k-m+1} \biggl(\frac{\lambda-v}{2\sqrt{D(\lambda)}}\biggr). \label{nonlocal21010}
\end{align}

Using \eqref{namely1010}, \eqref{nonlocal11010}, \eqref{nonlocal21010}, and using the property that 
the transformation from $x, s_1, s_2,\dots$ to $\varphi(x, {\bf s}), \rho(x, {\bf s}), \varphi_y(x, {\bf s}), \rho_y(x, {\bf s}), \dots $ is invertible 
(this interesting property can be proved by first showing with the help of \eqref{defvxs1214}--\eqref{defuxs1214} that 
the transformation from $x, s_1, s_2, s_3, \dots$ to $\varphi(x, {\bf s}), \rho(x, {\bf s}), \varphi_x(x, {\bf s}), \rho_x(x, {\bf s}), \dots $ 
is invertible and then by using~\eqref{xymap35}), 
we obtain 
\begin{align}
& \frac12\sum_{k\geq0} \biggl(
\p_y^k\biggl(1-\Bigl(\frac{\varphi-\lambda}{\sqrt{D(\lambda)}}\Bigr)^2\biggr) - 
 \sum_{m=1}^{k} \binom{k}{m} \p_y^{m-1} \Bigl(\frac{\varphi-\lambda}{\sqrt{D(\lambda)}}\Bigr) 
 \p_y^{k-m+1} \Bigl(\frac{\varphi-\lambda}{\sqrt{D(\lambda)}}\Bigr)\biggr) \frac{\p \Delta F}{\p \rho_k} \nn\\
& + \sum_{k\geq0} \biggl( \p_y^{k} \Bigl(\frac1{\sqrt{D(\lambda)}}
\Bigl(\frac{\varphi-\lambda}{\sqrt{D(\lambda)}}\Bigr)\Bigr) +
 \sum_{m=1}^{k} \binom{k}{m} \p_y^{m-1}\Bigl(\frac{\varphi-\lambda}{\sqrt{D(\lambda)}}\Bigr) \p_y^{k-m+1} \Bigl(\frac1{\sqrt{D(\lambda)}}\Bigr) \biggr) \frac{\p \Delta F}{\p \varphi_k}  \nn\\
&  + \frac{\e^2}4 \sum_{k_1,k_2\geq0} \p_y^{k_1+1}\Bigl(\frac{\varphi-\lambda}{\sqrt{D(\lambda)}}\Bigr) \p_y^{k_2+1}
\Bigl(\frac{\varphi-\lambda}{\sqrt{D(\lambda)}}\Bigr)
\biggl(\frac{\p^2 \Delta F}{\p \rho_{k_1} \p \rho_{k_2}} 
+ \frac{\p \Delta F}{\p \rho_{k_1}} \frac{\p \Delta F}{\p \rho_{k_2}} \biggr)\nn\\
&  + \e^2 \sum_{k_1,k_2\geq0} \p_y^{k_1+1} \Bigl(\frac1{\sqrt{D(\lambda)}}\Bigr) \p_y^{k_2+1}\Bigl(\frac1{\sqrt{D(\lambda)}}\Bigr)
\biggl(\frac{\p^2 \Delta F}{\p \varphi_{k_1} \p \varphi_{k_2}} + \frac{\p \Delta F}{\p \varphi_{k_1}} \frac{\p \Delta F}{\p \varphi_{k_2}} \biggr)  \nn\\
&  - \e^2 \sum_{k_1,k_2\geq0} \p_y^{k_1+1}\Bigl(\frac{\varphi-\lambda}{\sqrt{D(\lambda)}}\Bigr) \p_y^{k_2+1}\Bigl(\frac1{\sqrt{D(\lambda)}}\Bigr) 
\biggl(\frac{\p^2 \Delta F}{\p \rho_{k_1} \p \varphi_{k_2}} + \frac{\p \Delta F}{\p \rho_{k_1}} \frac{\p \Delta F}{\p \varphi_{k_2}} \biggr) \nn\\
& + \e^2 \sum_{k\ge0} \biggl(\p_y^{k+1}\Bigl(\nabla(\lambda)\Bigl(\frac1{\sqrt{D(\lambda)}}\Bigr)\Bigr) \frac{\p \Delta F}{\p \varphi_k} - \frac12 \p_y^{k+1}\Bigl(\nabla(\lambda)\Bigl(\frac{\varphi-\lambda}{\sqrt{D(\lambda)}}\Bigr)\Bigr) \frac{\p \Delta F}{\p \rho_k} \biggr) \nn\\
& + \nabla(\lambda)^2 (\F_0) = 0. \label{gueloop1214}
\end{align}
By using Lemmas \ref{lemma2}, \ref{lemma1009twopt}, we find that this loop equation 
coincides with~\eqref{loopnls30}. It would be interesting to make a connection between 
the loop equation~\eqref{gueloop1214} and the results in~\cite{Er, EMP}.

It follows from the dilaton equation~\eqref{GUEdilaton1210} that 
\beq\label{dilatonguenlsjet}
\sum_{k\ge1} k\varphi_k \frac{\p F_g}{\p \varphi_k} + \sum_{k\ge1} k\rho_k \frac{\p F_g}{\p \rho_k}
= (2g-2) F_g + \frac{\delta_{g,1}}{12} F_g, \quad g\ge1,
\eeq
which coincides with~\eqref{dilatonnlsjet}. 

We therefore conclude from the uniqueness of the solution to \eqref{gueloop1214}, \eqref{dilatonguenlsjet} that 
 $\Delta F$ can only differ from $\Delta F^{M_{\rm NLS}}$ by a constant that is independent of~$\e$. Let us 
 choose the constant for $F_1^{M_{\rm NLS}}$ so that $F_1^{M_{\rm NLS}}=F_1$. 

From~\eqref{viraconstr1012} it is easy to check that 
$\mathcal{F}^{M_{\rm NLS}, {\rm top}}_0({\bf t};\e) \big|_{t^{1,m}=(m+1)!s_{m+1}, \, t^{2,m} = x \delta^{m,0}, \, m\ge0}$ satisfies the genus zero part of the Virasoro constraints 
for the GUE free energy as well as the genus zero dilaton equation. By a straightforward computation, we also have 
$$\mathcal{F}^{M_{\rm NLS}, {\rm top}}_0({\bf t};\e) \big|_{t^{1,m}=0, \, t^{2,m} = x \delta^{m,0}, \, m\ge0} = \mathcal{F}_0(x,{\bf 0}).$$ 
Therefore, $\mathcal{F}^{M_{\rm NLS}, {\rm top}}_0({\bf t};\e) \big|_{t^{1,m}=(m+1)!s_{m+1}, \, t^{2,m} = x \delta^{m,0}, \, m\ge0}=\F_0(x,{\bf s})$. 
In particular, 
$$\varphi(x,{\bf s})= \varphi_{\rm top}(\bt)|_{t^{1,m}=(m+1)!s_{m+1},\, t^{2,m} = x \delta^{m,0} , \, m\ge0}, \quad \rho(x,{\bf s})= \rho_{\rm top}(\bt)|_{t^{1,m}=(m+1)!s_{m+1}, \, t^{2,m} = x \delta^{m,0}, \, m\ge0}.$$

Hence we arrive at the following theorem, originally observed by Dubrovin~\cite{Du09} (a result very similar to this theorem was 
also proved in~\cite{CvdLPS} by using a different method). 

\begin{theorem}\label{thmmain} The following identity holds:
\beq\label{ZZM}
Z(x,{\bf s};\e) = Z^{M_{\rm NLS}, {\rm top}}\big|_{t^{1,m}=(m+1)!s_{m+1}, \, t^{2,m} = x \delta^{m,0}}.
\eeq
\end{theorem}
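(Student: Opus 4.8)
The plan is to prove \eqref{ZZM} order by order in the genus expansion and then exponentiate. Writing $\F = \log Z = \sum_{g\ge0}\e^{2g-2}\F_g$ and $\F^{M_{\rm NLS},{\rm top}} = \sum_{g\ge0}\e^{2g-2}\F_g^{M_{\rm NLS},{\rm top}}$, it suffices to show that under the substitution $t^{1,m}=(m+1)!s_{m+1}$, $t^{2,m}=x\delta^{m,0}$ (denoted $\big|_{\mathrm{subst}}$ below) one has $\F_g(x,{\bf s}) = \F_g^{M_{\rm NLS},{\rm top}}\big|_{\mathrm{subst}}$ for every $g\ge0$. The genus-zero case will be governed by the Virasoro constraints, while the higher-genus cases will follow from the loop equation \eqref{gueloop1214} together with the uniqueness part of Theorem~B.

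For genus zero, I would start from the observation recorded just after \eqref{eqn:vira-GUE-x} that the operators $L_k^{M_{\rm NLS}}$ of \eqref{viranlsm1}--\eqref{viranlsk} become the GUE operators $L_k^{\rm GUE}$ under the substitution. Hence $\F_0^{M_{\rm NLS},{\rm top}}\big|_{\mathrm{subst}}$ satisfies the genus-zero part of the GUE Virasoro constraints and the genus-zero dilaton equation. Since these constraints determine a genus-zero free energy uniquely once its value at a single point is prescribed, I would compare at ${\bf s}={\bf 0}$, where a direct evaluation gives $\F_0^{M_{\rm NLS},{\rm top}}\big|_{t^{1,m}=0,\,t^{2,m}=x\delta^{m,0}} = \F_0(x,{\bf 0})$; this yields $\F_0(x,{\bf s}) = \F_0^{M_{\rm NLS},{\rm top}}\big|_{\mathrm{subst}}$. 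The crucial by-product for the higher-genus step is that the restricted topological solution $(\varphi_{\rm top},\rho_{\rm top})$ of the NLS principal hierarchy then coincides with $(\varphi(x,{\bf s}),\rho(x,{\bf s}))$ from \eqref{mapvuvr}.

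For $g\ge1$ I would use Lemma~\ref{lemma1} to write $\F_g = F_g(\varphi,\rho,\dots,\p_y^{3g-2}\varphi,\p_y^{3g-2}\rho) + \delta_{g,1}\zeta'(-1)$, where $y=s_1$. The key input is that $\Delta F = \sum_{g\ge1}\e^{2g-2}F_g$ satisfies the loop equation \eqref{gueloop1214}, which by Lemmas~\ref{lemma2} and \ref{lemma1009twopt} is identical to the NLS DZ loop equation \eqref{loopnls30}, together with the dilaton equation \eqref{dilatonguenlsjet}, identical to \eqref{dilatonnlsjet}. Theorem~B then forces $F_g = F_g^{M_{\rm NLS}}$ for $g\ge2$ and determines $F_1$ up to a single additive constant. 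I would fix that constant by requiring the full genus-one free energies to agree, which amounts to choosing the NLS genus-one integration constant to be exactly $\zeta'(-1)$; then, evaluating on the topological solution that matches under the substitution by the genus-zero step, $\F_g^{M_{\rm NLS},{\rm top}}\big|_{\mathrm{subst}} = F_g^{M_{\rm NLS}}(\varphi,\dots) = \F_g$ for every $g\ge1$.

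Summing $\e^{2g-2}\F_g = \e^{2g-2}\F_g^{M_{\rm NLS},{\rm top}}\big|_{\mathrm{subst}}$ over $g\ge0$ and exponentiating then gives \eqref{ZZM}. The step I expect to require the most care is the bookkeeping of this single genus-one constant: the GUE free energy carries the explicit additive term $\delta_{g,1}\zeta'(-1)$ in \eqref{jetguenls}, which is invisible to the loop equation, so the one constant of integration left undetermined by Theorem~B must be pinned down precisely, for the two partition functions — and not merely their ratio — to coincide. A secondary point to verify carefully is that the genus-zero matching genuinely forces the two topological solutions to agree under the substitution, since every higher-genus evaluation of the jet functions $F_g^{M_{\rm NLS}}$ relies on this coincidence.
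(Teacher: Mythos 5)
Your proposal follows the paper's proof essentially verbatim: the same genus-zero argument via the restricted Virasoro and dilaton constraints together with the evaluation at ${\bf s}={\bf 0}$ (which also yields the needed identification of $(\varphi_{\rm top},\rho_{\rm top})$ with $(\varphi,\rho)$), and the same higher-genus argument via the loop equation \eqref{gueloop1214}, its coincidence with \eqref{loopnls30} through Lemmas~\ref{lemma2} and~\ref{lemma1009twopt}, and the uniqueness in Theorem~B. Your remark that the genus-one integration constant must be chosen so as to absorb the $\delta_{g,1}\zeta'(-1)$ term of \eqref{jetguenls} is, if anything, slightly more careful than the paper's phrasing of that same step.
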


We see that starting with the $(3g-2)$ jet representation~\eqref{jetguenls} for the 
genus $g$ GUE free energy $\F_g(x,{\bf s})$, $g\ge1$, we can derive the loop equation which 
coincides with the DZ loop equation for the NLS Frobenius manifold. 
We remark that, in a similar way, 
if we start with the $(3g-2)$ jet representation~\eqref{jetguep1} for the 
genus $g$ GUE free energy $\F_g(x,{\bf s})$, $g\ge1$,
then we can obtain by solving the Virasoro constraints~\eqref{eqn:vira-GUE-x} the loop equation 
which will coincide with the DZ loop equation 
for the $\mathbb{P}^1$ Frobenius manifold. We omit this detail as it is almost identical with the details that we have given. 
We expect that, for a semisimple Frobenius manifold, the similar situation happens when there is a certain resonance in its spectrum. 
We study this in subsequent publications. 

\section{Concluding remarks}\label{section4}
In~\cite{Y2} we summarized several developments in topological gravity and matrix gravity 
for higher genera through the jet-space. Let us present in the following diagram 
a more global picture on 
the relations among a few interesting partition functions in the {\it big phase space}:
\begin{center}
\begin{tikzcd}
& & & Z^{\rm WK}  \arrow[leftrightarrow]{dd} \arrow[dr] \arrow[dl] \arrow[leftrightarrow]{dll} \arrow[leftrightarrow]{dlll}   & & & \\ 
Z^{\rm AWP} & Z^{\rm NBI} \arrow[leftrightarrow]{l} & Z^{\rm cBGW} \arrow[leftrightarrow]{l}  &  & \widetilde Z & Z_{\rm even} \arrow[leftrightarrow]{l}  & Z \arrow[l] \\
& & & Z_{{\rm H}} \arrow[ur] \arrow[ul] \arrow[leftrightarrow]{ull} \arrow[leftrightarrow]{ulll} &  &  & \\
\end{tikzcd}
\end{center}
Here, $Z_{\rm even}=Z_{\rm even}(x,s_2,s_4,\dots;\e)$ denotes the GUE partition function with even couplings (defined by 
restricting $s_1=s_3=s_5=\dots=0$ in the GUE partition function $Z(x,{\bf s};\e)$;  for the significance of $Z_{\rm even}$ in 
quantum gravity see e.g.~\cite{Witten}), 
$\widetilde Z=\widetilde Z(x,s_2,s_4,\dots;\e)$ denotes the modified GUE partition function (which can be obtained from 
$Z_{\rm even}(x,s_2,s_4,\dots;\e)$ via an invertible product formula~\cite{DLYZ2, DY17-2, DY20} giving the first arrow from right of the above diagram), 
$Z^{\rm WK}$ denotes 
the celebrated Witten--Kontsevich partition function~\cite{Kontsevich, Witten, Y2} (topological gravity after Witten),
 $Z_{\rm H}$ denotes a very special cubic Hodge partition function~\cite{DLYZ1, DLYZ2, DY17-2},  
 $Z_{\rm cBGW}$ denotes the corrected generalized BGW partition function (see e.g.~\cite{Ale18, MMS, YZhang}; BGW 
 stands for Br\'ezin--Gross--Witten), $Z_{\rm NBI}$ denotes the NBI partition 
 function~\cite{AC, XY} (NBI stands for non-abelian Born--Infeld), and $Z_{\rm AWP}$ denotes the 
 partition function of certain mixed $\psi$ and $\kappa$ classes introduced in~\cite{Ale21, YZagier}.
 The arrow from $Z_{\rm H}$ to $\widetilde Z$ is the Hodge-GUE correspondence~\cite{DLYZ2, DY17-2}; 
  the arrow between $Z_{\rm H}$ and $Z_{\rm WK}$ is the Hodge-WK correspondence~\cite{YZagier} (cf.~\cite{Ale21});
  the arrow from $Z^{\rm WK}$ to $\widetilde Z$ is called the WK-GUE correspondence~\cite{YZagier};
 the arrow from $Z^{\rm WK}$ to $Z_{\rm cBGW}$ is given by the WK-BGW correspondence~\cite{YZagier};   
 the arrow between $Z^{\rm WK}$ and $Z^{\rm NBI}$ is given by shifting time variables~\cite{AC, XY}; 
  the arrow between $Z^{\rm WK}$ and $Z^{\rm AWP}$ is given by shifting time variables~\cite{Ale21, YZagier} (cf.~\cite{KMZ, LX, MZ}); 
   the arrow between $Z_{\rm H}$ and $Z^{\rm AWP}$ comes from a result given in~\cite{Ale21, YZagier}; 
   the arrow between $Z_{\rm H}$ and $Z^{\rm NBI}$ can be given as a result of the Hodge--WK correspondence and the 
   definition of $Z^{\rm NBI}$ (cf.~\cite{XY}); 
 the arrow from  $Z_{\rm H}$ to $Z^{\rm cBGW}$ is the Hodge-BGW correspondence~\cite{YZhang}; 
  the arrow between $Z^{\rm AWP}$ and $Z^{\rm NBI}$ is given again by shifting time variables;
 the arrow between  $Z^{\rm NBI}$  and $Z^{\rm cBGW}$ can~\cite{XY} be given by 
  the Galilean symmetry of the KdV hierarchy.

\begin{appendix}
\section{Topological partition function of a semisimple Frobenius manifold}\label{appa}
In this appendix, we review the topological partition function of a semisimple Frobenius manifold and the Dubrovin--Zhang (DZ) loop equation. 
For more about the theory of Frobenius manifolds we refer to~\cite{Du96, Du99, DZ-norm}. 

Let $(M, \, \cdot, \, e, \, \langle\,,\,\rangle, E)$ be an $n$-dimensional Frobenius manifold. 
Take ${\bf v}=(v^1,\dots,v^n)$ a system of flat coordinates on~$M$ with respect to $\langle\,,\,\rangle$.
Denote 
\beq
\eta_{\alpha\beta} = \langle\p_\alpha, \p_\beta\rangle, \quad \eta=(\eta_{\alpha\beta}), \quad 
c_{\alpha\beta\rho} = \langle\p_\alpha\cdot\p_\beta, \p_\rho\rangle, \quad 
\alpha,\beta,\rho = 1,\dots,n.
\eeq 
Here and below we use $\p_\alpha:=\p/\p v^\alpha$, $\alpha=1,\dots,n$, to denote the coordinate vector fields. 
Define $(\eta^{\alpha\beta})=\eta^{-1}$. 
We assume that $E$ is diagonalizable, and 
 choose the flat coordinates $v^1,\dots,v^n$, 
such that the vector fields $e$ and $E$ have the form
\begin{align}
&e=\p_1,  \label{e517} \\
&E=\sum_{\beta=1}^n E^\beta \p_\beta, \quad E^\beta
= \Bigl(1-\frac{d}2-\mu_\beta\Bigr) v^\beta + r^\beta, \label{E517} 
\end{align}
where $\mu$'s and $r$'s are constants with 
$\mu_1=-d/2$, $r^1=0$. Denote $\mu:={\rm diag}(\mu_1,\dots,\mu_n)$,  
%Due to FM1, we can choose $(v^1,\dots,v^n)$ such that $\p_1=e$. 
%We have $c_{\alpha\beta1}=\eta_{\alpha\beta}$. 
and $v_\alpha:=\sum_{\rho=1}^n \eta_{\alpha\rho} v^\rho$, $\alpha=1,\dots,n$, and  
\beq
c^{\alpha}_{\beta\sigma} := \sum_{\rho=1}^n \eta^{\alpha\rho} c_{\rho \beta \sigma}, \quad \alpha,\beta,\sigma = 1,\dots,n.
\eeq
The intersection form $(\,,\,):T^*M\times T^*M\rightarrow \CC$ on the Frobenius manifold~$M$ is defined by 
\beq
(a,b) := i_E(a\cdot b), \quad a,b\in T^*M.
\eeq
Denote $g^{\alpha\beta}=(dv^\alpha,dv^\beta)$. We know from e.g.~\cite{Du96} that $g^{\alpha\beta}$ is a 
contravariant flat metric on~$M$.

In~\cite{Du93, Du96} Dubrovin introduced a one-parameter family of 
affine connections $\widetilde\nabla(z)$ by
\beq
\widetilde \nabla(z)_{X} Y := \nabla_X Y + z X \cdot Y, \quad \forall \, X,Y\in \mathcal{X}(M), \, z\in \CC,
\eeq
and showed that for any $z\in \CC$ the connection $\widetilde \nabla(z)$ is flat.
This one-parameter family of flat connections $\widetilde \nabla(z)$ is called the 
 {\it deformed flat connection}~\cite{Du96, DZ-norm}, also known as the {\it Dubrovin connection}~\cite{MS}. 
 It was shown in~\cite{Du96} (cf.~\cite{Du99}) 
that the deformed flat connection can be extended to a flat affine connection on $M\times \CC^*$ through 
\beq
\widetilde \nabla_{\p_z} X := \frac{\p X}{\p z} + E \cdot X - \frac1 z \mathcal{V} X, \quad 
\widetilde \nabla_{\p_z} \p_z := 0,  \quad  \widetilde \nabla_X \p_z := 0
\eeq
for $X$ being an arbitrary holomorphic vector field on $M \times \mathbb{C}^*$ with 
zero component along~$\p_z$. Here, $\p_z:=\p/\p z$, and 
$\mathcal{V}:=\frac{2-d}2 {\rm id} - \nabla E$.
The affine connection~$\widetilde\nabla$ is called the {\it extended deformed flat connection} on $M\times\CC^*$. 

A holomorphic function $f$ on some open subset of $M\times \CC^*$ is called {\it $\widetilde \nabla$-flat}, if 
\beq
\widetilde \nabla \biggl(\,\sum_{\alpha=1}^n \p_\alpha (f) dv^\alpha\biggr)=0.
\eeq
Denote 
\beq
\mathcal{U}^\alpha_\beta({\bf v}) := \sum_{\rho=1}^n E^\rho({\bf v}) c^\alpha_{\rho\beta}({\bf v}).
\eeq
Then for a {\it $\widetilde \nabla$-flat} holomorphic function~$f$, 
the differential system for its gradient $y^\alpha=\sum_{\beta=1}^n \eta^{\alpha\beta} \p_\beta(f)$ 
reads 
\begin{align}
&\p_\alpha (y) = z C_\alpha y, \label{pdedub517}\\
&\p_z(y) = \Bigl(\mathcal{U}+\frac{\mu}{z} \Bigr) y, \label{odedub517}
\end{align}
where $y=(y^1,\dots,y^n)^T$, $C_\alpha=(C_{\alpha\beta}^\rho)$, $\alpha=1,\dots,n$.
It is shown in~\cite{Du99} (cf. \cite{Du96,DZ-norm}) that for a sufficiently small ball $B\subset M$, 
there exists a fundamental solution 
matrix to \eqref{pdedub517}--\eqref{odedub517} of the form
\beq\label{mY517}
\mathcal{Y}= \Theta({\bf v};z) z^\mu z^R,
\eeq
such that $\Theta({\bf v};z)$ is analytic on $B\times \CC$, and 
\beq
\Theta({\bf v};0) = I, \quad \eta^{-1} \Theta({\bf v};-z)^T \eta \Theta({\bf v};z)= I.
\eeq
Here, $R$ is a certain constant matrix
 satisfying 
\begin{align}
& R=\sum_{s\ge1} R_s, \quad z^\mu R z^{-\mu} = \sum_{s\ge1} R_s z^s, \label{Rmu512}\\
& \eta^{-1} R_s^T \eta = (-1)^{s+1} R_s, \quad s\ge1.
\end{align}
Here the two summations in~\eqref{Rmu512} are actually finite sums. From~\eqref{Rmu512} we know that 
\beq
(R_s)^\alpha_\beta \neq 0 \quad \mbox{only if } \mu_\alpha-\mu_\beta =s, \quad  s\ge1.
\eeq 
The matrices $\mu,R$ are called 
 the monodromy data~\cite{Du96, Du99} at $z=0$ of the Frobenius manifold.

The fundamental solution matrix~\eqref{mY517} gives the existence of $\widetilde \nabla$-flat holomorphic functions 
$\tilde{v}_1({\bf v};z),\dots,\tilde{v}_n({\bf v};z)$ on $B \times \CC^*$ of the form
\begin{align}
(\tilde{v}_1({\bf v};z),\dots,\tilde{v}_n({\bf v};z)) = (\theta_1({\bf v};z),\dots,\theta_n({\bf v};z)) z^{\mu} z^R,
\end{align}
where $\theta_\alpha({\bf v};z)=:\sum_{m\geq0} \theta_{\alpha,m}({\bf v})z^m$, $\alpha=1,\dots,n$,  
are analytic functions on $B\times \CC$.
Together with FM1, one can verify that 
the coefficients $\theta_{\alpha,m}=\theta_{\alpha,m}({\bf v})$,
$\alpha=1,\dots,n$, $m\geq0$, can be chosen 
to satisfy the following equations: 
\begin{align}
&%\theta_\alpha(v;0) = v_\alpha \quad \Leftrightarrow \quad 
\theta_{\alpha,0}=v_\alpha, \label{theta000511}\\
&\p_1(\theta_{\alpha,m+1}) = \theta_{\alpha,m}, \quad  m\geq0, \label{thetacali512}\\
&%\widetilde \nabla(z) d\theta_\alpha(v;z) = 0 \quad \Leftrightarrow \quad 
\p_{\alpha} \p_{\beta} (\theta_{\rho,m+1}) = \sum_{\sigma=1}^n c^\sigma_{\alpha\beta} 
\p_{\sigma} (\theta_{\rho,m}), \label{theta1c}\\
&\langle \nabla \theta_\alpha({\bf v};z),  \nabla \theta_\beta({\bf v};-z) \rangle = \eta_{\alpha\beta}, \label{orthogtheta511} \\
& E (\p_{\beta}(\theta_{\alpha,m})) = 
(m+\mu_\alpha+\mu_\beta) \, \p_{\beta}(\theta_{\alpha,m}) + 
\sum_{\sigma=1}^n \sum_{k=1}^m (R_k)^\sigma_\alpha \p_{\beta} (\theta_{\sigma,m-k}) , \quad m\geq 0. \label{theta2c} %\\%& 
\end{align}
Here $\alpha,\beta,\rho=1,\dots,n$.
We call a choice of $(\theta_{\alpha,m})_{\alpha=1,\dots,n,\,m\geq 0}$ 
satisfying \eqref{theta000511}--\eqref{theta2c} 
a {\it calibration} on the Frobenius manifold~$M$. 
A Frobenius manifold with a chosen calibration is called {\it calibrated}.
We call $\tilde{v}_1({\bf v};z),\dots,\tilde{v}_n({\bf v};z)$ 
the {\it deformed flat coordinates}.

From now on $M$ is a calibrated Frobenius manifold.
Following~\cite{Du93, Du96}, define the {\it two-point correlation functions for the principal hierarchy}
$\Omega_{\alpha,m_1;\beta,m_2}^{[0]}$, $\alpha,\beta=1,\dots,n, \, m_1,m_2\geq0$, 
by means of generating series as follows:
\beq\label{deftwopoint511}
\sum_{m_1,m_2\ge0} \sum_{\alpha,\beta=1}^n \Omega_{\alpha,m_1;\beta,m_2}^{[0]}({\bf v}) z^{m_1} w^{m_2} = 
\frac{\langle \nabla\theta_\alpha({\bf v};z), \nabla\theta_\beta({\bf v};w) \rangle-\eta_{\alpha\beta}}{z+w} .
\eeq
Define
\beq
F^M({\bf v})=\frac12 \Omega_{1,1;1,1}^{[0]}({\bf v}) -\sum_{\rho=1}^n v^\rho \Omega_{\rho,0;1,1}^{[0]}({\bf v}) 
+ \frac12 \sum_{\alpha,\beta=1}^n 
v^\alpha v^\beta \Omega_{\alpha,0;\beta,0}^{[0]}({\bf v}).
\eeq
We call $F^M({\bf v})$ the {\it potential} of~$M$, which satisfies 
\beq\label{dddF519}
\frac{\p^3 F^M({\bf v})}{\p v^\alpha\p v^\beta \p v^\rho} = c_{\alpha\beta\rho}({\bf v}), \quad \alpha,\beta,\rho=1,\dots,n.
\eeq

Let us continue to recall the {\it principal hierarchy} associated to~$M$, which is a hierarchy of systems of PDEs of 
hydrodynamic type, introduced by Dubrovin~\cite{Du93, Du96}: 
\beq\label{principalh520}
\frac{\p v^\alpha}{\p t^{\beta,m}} = \sum_{\gamma=1}^n \eta^{\alpha\gamma} \p_X \Bigl(\frac{\p \theta_{\beta,m+1}}{\p v^\gamma}\Bigr) 
= \sum_{\gamma,\rho,\sigma=1}^n \eta^{\alpha\gamma} c_{\gamma\sigma}^\rho 
\frac{\p \theta_{\beta,m}}{\p v^\rho} \frac{\p v^\sigma}{\p X}, \quad \alpha,\beta=1,\dots,n,\, m\ge0.
\eeq
Dubrovin~\cite{Du93, Du96} showed that the principal hierarchy is integrable, i.e., the flows in this hierarchy 
pairwise commute. It is easy to see that the $\p_{t^{1,0}}$-flow reads: 
\beq
\frac{\p v^\alpha}{\p t^{1,0}} = \frac{\p v^\alpha}{\p X}, \quad \alpha=1,\dots,n.
\eeq
So we identify $t^{1,0}$ with the space variable~$X$.
By a {\it power-series solution} ${\bf v}={\bf v}(\bt)$ to the principal hierarchy~\eqref{principalh520},  we mean 
an element ${\bf v}(\bt)$ in $\bigl(\CC[[t^{1,0}-t^1_o,\dots,t^{n,0}-t^n_o]][[{\bf t}_{>0}]]\bigr)^n$ satisfying all 
equations in~\eqref{principalh520}. Here, ${\bf t}=(t^{\beta,m})_{\beta=1,\dots,n,\, m\ge0}$, 
${\bf t}_{>0}=(t^{\beta,m})_{\beta=1,\dots,n,\, m\ge1}$, and 
$t^\beta_o$, $\beta =1,\dots,n$, are certain constants which we call the {\it primary initials}. 
We call $t^{1,0},\dots,t^{n,0}$ the {\it primary times}, 
and $t^{\beta,m}$, $\beta=1,\dots,n$, $m\ge1$, the {\it descendent times}.
We note that  $(v^\alpha(t^1_o,\dots,t^n_o,0,\dots,0))_{\alpha=1}^n$ should belong to the small ball~$B$.
The power-series solutions to~\eqref{principalh520} can be uniquely 
characterized by their initial data  
\beq
f^\alpha(x)=v^\alpha({\bf t})|_{t^{\beta,0}=\delta^{\beta,1} (X-t^1_o) + t^\beta_o, \, {\bf t}_{>0}={\bf 0}, \, \beta=1,\dots,n}, \quad \alpha=1,\dots,n.
\eeq

For a power-series solution ${\bf v}={\bf v}({\bf t})=(v^1({\bf t}),\dots, v^n({\bf t}))$ 
to the principal hierarchy,  
there exists a function $\F_0=\F_0({\bf t})\in\CC[[t^{1,0}-t^1_o,\dots,t^{n,0}-t^n_o]][[{\bf t}_{>0}]]$, such that 
\beq\label{defif01025}
\frac{\p^2 \log \F_0}{\p t^{\alpha,m_1} \p t^{\beta,m_2}} = \Omega_{\alpha,m_1; \beta,m_2}^{[0]}({\bf v}({\bf t})), \quad 
\alpha,\beta=1,\dots,n,\, m_1,m_2\ge0.
\eeq
We call 
$\tau^{[0]}:=e^{\e^{-2} \F_0}$ the {\it tau-function of the solution}~${\bf v}({\bf t})$ to the principal hierarchy~\eqref{principalh520}.
We call the map 
\beq
(t^{1,0},\dots,t^{n,0}) \mapsto \bigl(V^1,\dots,V^n\bigr)\in B,
\eeq
defined by 
$V^\alpha(t^{1,0},\dots,t^{n,0})= v^\alpha({\bf t})|_{{\bf t}_{>0}={\bf 0}}$
the {\it Riemann map associated to the solution}~${\bf v}({\bf t})$.

The so-called {\it topological solution} ${\bf v}_{\rm top}(\bf t)$ to the principal hierarchy is defined as the unique 
solution in $(\CC[[t^{1,0}-t^1_o,\dots,t^{n,0}-t^n_o]][[{\bf t}_{>0}]])^n$ to~\eqref{principalh520} specified by % the following initial data
\beq\label{initopprin521}
f_{\rm top}^\alpha(x):= v_{\rm top}^\alpha({\bf t})|_{t^{\beta,0}=\delta^{\beta}_{1} (X-t^1_o) + t^\beta_o, \, {\bf t}_{>0}={\bf 0}, \, \beta=1,\dots,n} = \delta^{\alpha,1} (X-t^1_o) + t^\alpha_o, \quad \alpha=1,\dots,n. 
\eeq
%\begin{lemma}\label{toporiemann521}
It is easy to verify that ${\bf v}_{\rm top}(\bf t)$ has the property:
\beq\label{vtopini0521}
v_{\rm top}^\alpha({\bf t})|_{{\bf t}_{>0}={\bf 0}} \equiv t^{\alpha,0}, \quad \alpha=1,\dots,n.
\eeq
That is, 
the Riemann map associated to the topological solution is the identity map. 
Following Dubrovin~\cite{Du96}, define the genus zero topological free energy $\F^M_0(\bt)$ by
\beq\label{fm01213}
\F^{\rm top}_0(\bt) = \frac12 \sum_{\alpha,\beta=1}^n \sum_{m_1,m_2} \tilde t^{\alpha,m_1} \tilde t^{\beta,m_2} 
\Omega_{\alpha,m_1;\beta,m_2}^{[0]} ({\bf v}_{\rm top}(\bt)),
\eeq
where $\tilde t^{\alpha,k}=t^{\alpha,k}-\delta^{\alpha,1}\delta^{k,1}$.
It can be verified that $\F^{\rm top}_0(\bt)$ satisfies~\eqref{defif01025} with ${\bf v}$ being ${\bf v}_{\rm top}$. So
the exponential $e^{\e^{-2} \F^{\rm top}_0(\bt)}$ is a particular tau-function for the principal hierarchy, 
which is called the {\it topological tau-function} for the principal hierarchy.

We proceed and consider the Virasoro constraints~\cite{DVV, DZ99, DZ-norm, EHX, Getzler, Gi01, LT} following Dubrovin and Zhang~\cite{DZ-norm}. 
Define the linear operators $a_{\alpha,p}$, $p\in \mathbb{Z}+1/2$, by
\beq\label{creation_annihilation}
a_{\alpha,p} = \left\{\begin{array} {cl}
\epsilon \frac{\p}{\p
t^{\alpha,p-1/2}}, & p>0, \\
\\
\epsilon^{-1} (-1)^{p+1/2} \eta_{\alpha\beta} \,
\tilde t^{\beta,-p-1/2}, &  p<0. \end{array}\right.
\eeq
Denote
\begin{align}
f_\alpha(\lambda;\nu) &= \int_0^\infty \frac{dz}{z^{1-\nu}} e^{-\lambda z}
\sum_{p\in\mathbb{Z}+\frac12} \sum_{\beta=1}^n a_{\beta,p} \left(z^{p+\mu} z^R\right)^\beta_\alpha, \label{natural_with_nu}\\
G^{\alpha\beta}(\nu) &= -\frac{1}{2\pi} \Bigl[\bigl(e^{\pi i R}e^{\pi i (\mu+\nu)}+e^{-\pi i R}e^{-\pi i (\mu+\nu)}\bigr)\eta^{-1}\Bigr]^{\alpha\beta}.\label{Galphabetanu0411}
\end{align}
Here, the right-hand side of~\eqref{natural_with_nu} is understood as a term-by-term integral.
Following~\cite{DZ-norm} define the regularized stress tensor $T(\lambda;\nu)$ as follows:
\begin{equation}
T(\lambda;\nu) = -\frac12 \sum_{\alpha,\beta=1}^n: \p_\lambda(f_\alpha(\lambda;\nu)) G^{\alpha\beta}(\nu) \p_\lambda(f_\beta(\lambda;-\nu)):
+ \frac{1}{4\lambda^2} {\rm tr} \Bigl(\frac{1}{4}-\mu^2\Bigr). \label{Virasoro_field_with_nu}
\end{equation}
Here, ``$:~:$" denotes the {\it normal
ordering} (putting the annihilation operators always on the right of the creation operators).
The regularized stress tensor has the form~\cite{DZ-norm}:
\begin{align}\label{TnuLmnu427}
& T(\lambda;\nu) = \sum_{k\in \mathbb{Z}} \frac{L_k(\nu)}{\lambda^{k+2}}.
\end{align}
Then, as it is shown in~\cite{DZ-norm} (see also~\cite{DZ04}), the operators $L_k$, $k\ge-1$, defined by 
\beq\label{viralm0922}
L_k:=\lim_{\nu\to 0} L_k(\nu), \quad k\geq-1,
\eeq 
yield the {\it Virasoro operators} given in~\cite{DZ99, DZ-norm} (cf.~\cite{EHX, LT}) for the Frobenius manifold~$M$.
 The operators $L_k$ satisfy the following Virasoro commutation relations:
\beq\label{Lmn0407}
[L_k,L_\ell] = (k-\ell) L_{k+\ell}, \quad \forall\,k,\ell\geq -1.
\eeq
In~\cite{DZ99} Dubrovin and Zhang proved the following genus zero Virasoro constraints:
\beq
e^{-\e^{-2} \F^{\rm top}_0(\bt)} L_k \bigl(e^{\e^{-2} \F^{\rm top}_0(\bt)}\bigr) = \mathcal{O}(1) \quad (\e\to0), \quad k\geq-1.
\eeq

Under the semisimplicity assumption, Dubrovin and Zhang~\cite{DZ-norm} consider the reconstruction of  
higher genus free energies via solving  
 the following Virasoro constraints together with the dilaton equation:
\begin{align}
& L_m(Z^{\rm top}(\bt;\e)) = 0, \quad \forall\, m\geq-1, \label{virazM1213}\\
& \sum_{\alpha=1}^n \sum_k \tilde t^{\alpha,k} \frac{\p Z^{\rm top}(\bt;\e)}{\p t^{\alpha,k}} 
+ \e \frac{\p Z^{\rm top}(\bt;\e)}{\p \e}+ \frac{n}{24} Z^{\rm top}(\bt;\e) = 0, 
\label{dilatonzM1213}\\
& Z^{\rm top}(\bt;\e) = e^{\sum_{g\ge0} }\e^{2g-2} \F^{\rm top}_g(\bt).
\end{align}
Here $\F^{\rm top}_g(\bt)$ for $g\geq1$ will be called the {\it genus $g$ topological free energy of~$M$} and $Z^{\rm top}(\bt;\e)$ will 
be called the {\it topological partition function of~$M$} (see also~\cite{Givental1, Gi01}).
The way by Dubrovin and Zhang is to consider the following ansatz \cite{DW, DZ-norm, EYY, Getzler2, Witten} (cf.~also~\cite{IZ}) for $\F^{\rm top}_g$:
\begin{align}
&  \F^{\rm top}_g(\bt) 
= F_g^M\Bigl({\bf v}_{\rm top}(\bt),\frac{\p {\bf v}_{\rm top}(\bt)}{\p X}, \dots, \frac{\p^{3g-2} {\bf v}_{\rm top}(\bt)}{\p X^{3g-2}}\Bigr),
\end{align}
and to recast the Virasoro constraints into the {\it loop equation} 
\begin{align}
& \sum_{\rho=1}^n \sum_{r\ge0} \frac{\p \Delta F^M}{\p v^{\rho}_r} \p^r\biggl(\biggl(\frac1{E-\lambda}\biggr)^\rho\,\biggr) 
+ \sum_{\rho,\alpha,\beta=1}^n 
\sum_{r\ge1} \frac{\p \Delta F^M}{\p v^{\rho}_r} \sum_{k=1}^r 
\binom{r}{k} \p^{k-1}\Bigl(\frac{\p p_\alpha}{\p v^1}\Bigr) G^{\alpha\beta} \p^{r-k+1} \Bigl(\frac{\p p_\beta}{\p v_\rho}\Bigr) \nn \\
& = - \frac1{16} {\rm tr} \bigl((\mathcal{U}-\lambda)^{-2}\bigr) + \frac14 {\rm tr} \bigl(((\mathcal{U}-\lambda)^{-1} \mathcal{V})^2\bigr) \nn\\
& \quad + \frac{\e^2}2 \sum_{\alpha,\beta,\rho,\sigma=1}^n\sum_{k,\ell\ge0}  \biggl(\frac{\p^2 \Delta F^M}{\p v^{\rho}_k \p v^{\sigma}_\ell} + 
\frac{\p \Delta F^M}{\p v^{\rho}_k }\frac{\p \Delta F^M}{\p v^{\sigma}_\ell}\biggr) 
\p^{k+1}\Bigl(\frac{\p p_\alpha}{\p v_\rho}\Bigr) 
G^{\alpha\beta} \p^{\ell+1} \Bigl(\frac{\p p_\beta}{\p v_\sigma}\Bigr) \nn\\
& \quad + \frac{\e^2}2 \sum_{\alpha,\beta,\rho=1}^n \sum_{k\ge0} \frac{\p \Delta F^M}{\p v^{\rho}_k} 
\p^{k+1} \Bigl(\Bigl(\nabla \Bigl(\frac{\p p_\alpha}{\p \lambda}\Bigr) \cdot 
\nabla \Bigl(\frac{\p p_\beta}{\p \lambda}\Bigr) \cdot \p({\bf v})\Bigr)^\rho \, \Bigr) G^{\alpha\beta} , \label{loopM1208}
\end{align}
together with
\begin{align}
& \sum_{\alpha=1}^n \sum_{k\ge1} k v^{\alpha}_k \frac{\p F_g^M}{\p v^{\alpha}_k} = (2g-2) F_g^M + \delta_{g,1} \frac{n}{24}, \quad g\geq1.
\label{dilatonjet1208}
\end{align}
Here $F_g^M=F_g^M({\bf v}, {\bf v}_1, \dots, {\bf v}_{3g-2})$, $g\geq1$, and $\Delta F^M = \sum_{g\geq1} \e^{2g-2} F_g^M$. 
  %The following theorem was proved by Dubrovin and Zhang.

\smallskip

\noindent {\bf Theorem B} (Dubrovin--Zhang~\cite{DZ-norm}).
{\it Let $M$ be a semisimple calibrated Frobenius manifold. Equations~\eqref{loopM1208}--\eqref{dilatonjet1208} 
have a unique solution up to an additive pure constant. Here, $\p=\sum_{\alpha=1}^n\sum_{k\ge0} v^{\alpha}_{k+1} \frac{\p}{\p v^\alpha_k}$. Moreover,  the unique solution $F_g^M$ ($g\ge2$) depends  
polynomially on $v^{\alpha}_2$, \dots, $v^{\alpha}_{3g-2}$ and rationally on $v^{\alpha}_1$, $\alpha=1,\dots,n$.}

\smallskip

The topological partition function $Z^{\rm top}(\bt;\e)$ 
has the following properties:
\begin{itemize}
\item[i)] It satisfies the Virasoro constraints~\eqref{virazM1213};
\item[ii)] It satisfies the dilaton equation~\eqref{dilatonzM1213};
\item[iii)] It is a  tau-function for the DZ hierarchy of~$M$ (called 
the topological tau-function);
\item[iv)] Its logarithm has the genus expansion;
\item[v)] The genus zero part has the expression~\eqref{fm01213};
\item[vi)] The genus $g$ part $(g\ge1)$ has the $(3g-2)$ jet-variable representation.
\end{itemize}
We end the appendix in mentioning two facts: 1. the above properties i), ii), iv), v), vi) uniquely determine $Z^{\rm top}({\bf t};\e)$. 
2. the above properties ii), iii) together with the $k=-1$ case of the property i) uniquely determine $Z^{\rm top}({\bf t};\e)$.
\end{appendix}

\end{document}